\newcommand{\e}{\mathbb{E}}
\newcommand{\E}{\mathbb{E}}
\renewcommand{\P}{\mathbb{P}}
\newcommand{\R}{\mathbb{R}}
\newcommand{\1}{{\mathbf 1}}
\def\be{\begin{align}}
\def\ee{\end{align}}
\def\b*{\begin{eqnarray*}}
\def\e*{\end{eqnarray*}}
\def\vp{\varphi}
\def\be{\begin{eqnarray}}
\def\ee{\end{eqnarray}}
\def\beq{\begin{equation}}
\def\eeq{\end{equation}}
\def\b*{\begin{eqnarray*}}
\def\e*{\end{eqnarray*}}
\def\bi{\begin{itemize}}
\def\ei{\end{itemize}}
\def \1{{\bf 1}}
\def\vp{\varphi}
\def\eps{\varepsilon}
\def\={\;=\;}
\def\x{\times}
\def \proof{{\noindent \bf Proof. }}
\def \ep{\hbox{ }\hfill$\Box$}
 \def\reff#1{{\rm(\ref{#1})}}
 \def\vs#1{\vspace{#1mm}}
\def \E{\mathbb{E}}
\def \F{\mathbb{F}}
\def \N{\mathbb{N}}
\def \P{\mathbb{P}}
\def \R{\mathbb{R}}
\def\T{\mathbb{T}}
\def\Bc{{\cal B}}
\def\Ec{{\cal E}}
\def\Fc{{\cal F}}
\def\Lc{{\cal L}}
\def\Nc{{\cal N}}
\def\Kc{{\cal K}}
\def\Lc{{\cal L}}
\def\Xb{\bar X}
\def\Zb{\bar Z}
\newtheorem{Theorem}{Theorem}[part]
\newtheorem{Proposition}{Proposition}[part]
\newtheorem{Assumption}{Assumption}[part]
\newtheorem{Lemma}{Lemma}[part]
\newtheorem{Remark}{Remark}[part]
\makeatletter \@addtoreset{equation}{section}
\def\qr{{\rm q}}
\def\Qr{{\rm Q}}
\def\kr{{\rm k}}
\def\kr{{\rm k}}
\def\Xb{\mathbf{X}}
\def\Mb{\mathbf{M}}
\def\vr{{\rm v}}
\def\Kc{{\mathcal K}}
\def\Zb{\mathbf Z}
\def\as{\rm a.s.}
\def\T{{\rm T}}
\def\zr{{\rm z}}
\def\U{{\rm U}}
\def\Ab{\mathbf A}
\def\Ec{{\rm E}}
\begin{document}

\title{Optimal trading with online parameters revisions}

\author{N. Baradel\thanks{ENSAE-ParisTech, CREST, and, Université Paris-Dauphine, PSL Research University, CNRS, UMR [7534], CEREMADE, 75016 Paris, France. }, B. Bouchard\thanks{Université Paris-Dauphine, PSL Research University, CNRS, UMR [7534], CEREMADE, 75016 Paris, France. This research is supported by the Initiative de Recherche  ``Stratégies de Trading et d'Investissement Quantitatif'',   Kepler-Chevreux and Collège de France.}, N.~M. Dang\thanks{Kepler-Chevreux 112 Avenue Kléber, 75116 Paris, France.}}
\date{\today}
\maketitle

\begin{abstract} The aim of this paper is to explain how  parameters adjustments can be integrated in the design or the control of  automates of trading. Typically, we are interested by the online estimation of the market impacts generated by robots or single orders, and how they/the controller should react in an optimal way to the informations  generated by the observation of the realized impacts. This can be formulated as an optimal impulse control problem with unknown parameters, on which a prior is given. 
We explain how  a mix of the classical Bayesian updating rule and of optimal control techniques allows one to derive the dynamic programming equation satisfied by the corresponding value function, from which the optimal policy can be inferred. We provide an example of convergent finite difference scheme and consider typical examples of applications. 
  \end{abstract}
 
\noindent{Key words: } Optimal trading, market impact, uncertainty, Bayesian filtering.

\parindent=0pt

\section{Introduction}

 The design of trading algorithms is based on two corner stones: first a model need {to} be estimated, second an optimal trading policy has to be computed, given a prescribed criteria. 
 
As in traditional portfolio management, the notions of volatility or correlation play an important role and need to be estimated. At the level of an order book, one can also be interested by the speed of arrival/cancelation of orders, etc.  These refer to the concept of exogenous dynamics, which can be inferred offline, see e.g.~\cite{bacry2014hawkes,delattre2013estimating,rosenbaum2007etude} for references, and possibly adjusted by using traditional filtering techniques, see e.g.~\cite{bensoussan2004stochastic,duncan2002adaptive}.  
 
More importantly, a trading robot has his own impact on the dynamics of the traded assets, either because of market microstructure effects at the high frequency level, or because traded volumes are non-neglectable with respect to the so-called market volume, see \cite{bouchaud2010price,lehallebook} for surveys as well as \cite{brokmann2014slow} for recent references. The knowledge of this impact is crucial. However, unlike  other market parameters, it can be observed only when the algorithm is actually running, offline estimation cannot be done.

As a matter of fact, the controller faces a classical dilemma: should he impulse the system (pass orders) to  gain immediately more information to the price of possible immediate losses, or rather try to maximise his current expected reward with the risk of not learning for the future?   In any case, the optimal trading policy has to incorporate the fact that the knowledge on the impact parameters will evolve along  time, as the effects of the robot on the system are revealed, and that the uncertainty on their value is a source of risk.     

One way to analyse this situation  is to use the multi-arms bandit recursive learning  approach of \cite{laruelle2011optimal,laruelle2013optimal}. By sending successive impulses to the system, one increases his knowledge on the true distribution of the response function.  It provides asymptotically optimal policies. The main advantage is that it is model free. On the other hand it requires (weak) ergodicity-type conditions which have little chance to be satisfied if the price/book order is actually impacted. Moreover, the global flow of orders is not optimal, {it} starts to be optimal only in the long range. 
\vs2

In this paper, we propose to use the classical Bayesian approach, see e.g.~\cite{easley} for general references. A  similar idea has already been suggested in \cite{almgren2006bayesian} in the context of  optimal trading, in a different and very particular framework (trend estimation). In this approach, one fixes an {\sl a priori}  distribution on the true value of the parameters. When a new post-trade information arrives, this distribution is updated according to the  Bayesian rule to provide an {\sl a posteriori} distribution, which will be used as a new prior for the next (bunch of) trade(s). This can, in theory, address pretty general model-free optimal control problems, in the sense that the unknown parameter can be the whole distribution of the response function. Still, the {\sl a posteriori} distribution will be dominated by the prior. If the true distribution is not, the sequence of estimators cannot converge to it. Also, in practice, computational time constraints will force one to restrict to a class of parameterized distributions, so as to reduce to a space of small dimension. Hence, it requires to choose a set of possible models. Our procedure will only reveal what is the value of the parameters most likely to be, given this prescribed framework.  In practice,  classes of models are already used by practitioners, but parameters are difficult to estimate at a large scale in an automatic way and evolve according to time and market conditions. From this perspective, we believe that our approach allows us to address their estimation issues in an efficient way. 

There are two ways to consider this updating mechanism. One is that it allows one to estimate the true value of the parameters while acting in an optimal manner. This is the point of view developed in \cite{easley}, who provides conditions for a discrete time infinite horizon model that ensure the convergence of the {\sl a posteriori} law to the Dirac mass at the true parameter value. Obviously, this requires identification conditions since we can observe the effect on the system only along the optimal policy, as well as some ergodicity conditions.  Another point of view is that it actually falls out automatically from dynamic programming considerations when solving the optimal control problem associated to the initial prior. 
\vs2

We focus here on the second point of view: given a prior on the true parameter, how should we act in a optimal way? After all, we want to be optimal given a prior, and not act on the system  just to refine the prior if this is not optimal.  Obviously, if the algorithm has to be used repeatedly on the same market, and if the conditions on the market are stable, the {\sl a posteriori} distribution obtained after running the algorithm can be used as a new, more precise, prior for the next time it will be launched. The convergence issue to a Dirac distribution is left for future researches. Again, this is more of academic interest.  

\vs2

We  consider here the general abstract formalism introduced in \cite{BBD16} which aims at pertaining for most models used in practice.  In particular, we can work at the meta-order level (control of smart-routines) or at the  level of single orders (design of smart-routines or scheduling). In the first case, we are interested  by the optimal control of robots that are already given, this is similar to \cite{boucharddanglehalle}.  We only observe the global impact of the robot, after it stops. The second case concerns the design of the robots themselves, as in e.g.~\cite{avellaneda2008high,gueant2012optimal,laruelle2011optimal,laruelle2013optimal}. A typical question is whether an order should be passive or aggressive. When an aggressive order is posted,  we possibly observe an impact immediately (unless it is of small size). As for a (bunch of) limit order(s), we infer the speed at which they have been executed, if they are before being cancelled.   The choice of the trading platform can be addressed similarly, etc. 
\vs2

The rest of this paper is organized as follows. After presenting the general framework of   \cite{BBD16}, we provide their main  characterization:  the value function  is the  solution of a quasi-variational partial differential equation  from which one can infer the optimal trading policy.  We then explain how the equation can be solved numerically. This is illustrated by toy models inspired from the literature, on which simulated based optimal strategies are provided.

\section{Abstract framework}\label{sec: setup}

Before to consider typical examples of application, we present here the abstract framework proposed in the companion paper \cite{BBD16}. We believe that it is flexible enough to pertain for most problems encountered in practice. 
\vs2 

We model the driving noise by a   $d$-dimensional Brownian motion   $W$ (defined on the canonical space $C([0,T],\R^{d})$ endowed with the Wiener measure $\P$). One could also consider jump type processes, such as compound Poisson processes, the same analysis would apply.
 
The unknown parameter $\upsilon$ is supported by  a (Polish) space $(\U,\Bc(\U))$, and our initial prior on $\upsilon$ is assumed to belong to a  locally compact subset $\Mb$ of   the set of Borel probability measures on $\U$ (endowed with the topology of weak convergence).  In the applications of Section \ref{sec: examples}, the collection of possible priors can be identified as a subset of a finite dimensional space (e.g.~the parameters of a Gaussian distribution, the weights of law with finite support, etc.).  Then, $\Mb$ can be simply viewed as a finite dimensional space.

To allow for additional randomness in the measurement of the effects of trades on the system, we consider another (Polish) space $\Ec$ on which is defined a family $(\epsilon_{i})_{i \geq 0}$ of i.i.d.~random variables with common measure $\mathbb{P}_{\epsilon}$ on $\Ec$. On the product space $ {\Omega} := C([0,T],\R^{d}) \times \U \times \Ec^{\mathbb{N}}$, we consider the family of measures $\{\mathbb{P} \times m \times \mathbb{P}_{\epsilon}^{\otimes \mathbb{N}} : m \in \Mb\}$ and denote by $\mathbb{P}_{m}$ an element of this family whenever $m \in \Mb$ is fixed. The operator $\mathbb{E}_{m}$ is the expectation associated to $\mathbb{P}_{m}$. Note that $W$, $\upsilon$ and $(\epsilon_{i})_{i\ge 0}$ are independent under each $\P_{m}$. 

For $m\in \Mb$ given, we let $\F^{m}=(\Fc^{m}_{t})_{t\ge 0}$ denote the $\P_{m}$-augmentation of the filtration $ \F= (  \Fc_{t})_{t\ge 0}$ defined by $  \Fc_{t}=\sigma((W_{s})_{s\le t}, \upsilon,(\epsilon_{i})_{i \geq 0})$ for $t\ge 0$. Hereafter, all the random variables are considered with respect to the probability space $(\Omega,\Fc_{T}^{m})$ with $m\in \Mb$ given by the context, and where $T$ is a fixed time horizon.

\subsection{The controlled system}

Let  $\Ab \subset [0, T] \times \mathbb{R}^{d}$ be  a (non-empty) compact set. It will be the set in which our controls (trading policies) will take place.   Given $N\in \N$ and $m\in \Mb$, we denote by  $\Phi^{\circ,m}_{N}$ the collection of sequences of random variables $\phi=(\tau_{i},\alpha_{i})_{i\ge 1}$ on $(\Omega,\Fc_{T}^{m})$ with values in $\R_{+}\x \Ab$ such that $(\tau_{i})_{i\ge 1}$ is a non-decreasing sequence of $\F^{m}$-stopping times satisfying $\tau_{j}>T$ $\P_{m}-\as$~for $j>N$.  We set 
    \[
        \Phi^{\circ,m} := \bigcup_{N \geq 1}\Phi^{\circ,m}_{N}.
    \]
 An element $\phi=(\tau_{i},\alpha_{i})_{1\le i\le N}\in \Phi^{\circ,m}$ will be our impulse control and we write $\alpha_{i}$ in the form 
 $$
 \alpha_{i}=(\ell_{i},\beta_{i}) \mbox{ with }  \ell_{i}\in [0,T] \mbox{ and }\beta_{i}\in \R^{d} \;\P_{m}-\as
 $$  
 More precisely, the $\tau_{i}$'s will be the times at which an impulse is made on the system (e.g.~a trading robot is launched), $\beta_{i}$ will model the nature of the order send at time $\tau_{i}$ (e.g.~the parameters used for the trading robot), and $\ell_{i}$ will stand for the maximal time length during which  no new intervention on the system can be made (e.g.~the time prescribed to the robot to send orders on the market).  
\vs2

From now on, we shall always use the notation $(\tau^{\phi}_{i},\alpha^{\phi}_{i})_{i\ge 1}$ with $\alpha^{\phi}_{i}=(\ell^{\phi}_{i},\beta^{\phi}_{i})$ to refer to a control $\phi \in \Phi^{\circ,m}$. 
\vs2
    
We allow for not observing nor being able to act on the system before  a random time $\vartheta^{\phi}_{i}$ defined by 
 \[
    \vartheta^{\phi}_{i} := \varpi(\tau^{\phi}_{i},X^{\phi}_{\tau^{\phi}_{i}-},\alpha^{\phi}_{i}, \upsilon, \epsilon_{i}),
  \]
where  $X^{\phi}$ is the controlled state process (stock prices, market volumes, wealth, etc.) that will be described below, and 
\be\label{eq: hype varpi}
\varpi :\R_{+}\x \R^{d}\x \Ab \times \U \times \Ec \rightarrow [0, T]\;\text{ is measurable, such that $\varpi(t,\cdot)\ge t$ for all $t\ge 0$. }
\ee  
In the case where the actions consist in launching a trading robot at $\tau_{i}^{\phi}$ during a certain time $\ell^{\phi}_{i}$,  we can naturally take $ \vartheta^{\phi}_{i}=\tau^{\phi}_{i}+\ell^{\phi}_{i}$. If the action consists in placing a limit order during a maximal duration $\ell^{\phi}_{i}$, $\vartheta^{\phi}_{i}$ is the time at which the limit order is executed if it is less than $\tau^{\phi}_{i}+\ell^{\phi}_{i}$, and $\tau^{\phi}_{i}+\ell^{\phi}_{i}$ otherwise. 
\vs2

We say that $\phi \in \Phi^{\circ,m}$ belongs to $\Phi^{m}$ if 
\[
\vartheta^{\phi}_{i}\le \tau^{\phi}_{i+1}\; \mbox{ and }\; \tau^{\phi}_{i}<\tau^{\phi}_{i+1}\;\;\mbox{ $\P_{m}$-a.s. for all $i\ge 1$}, 
\]
and define 
\be\label{eq: def Nphi}
\Nc^{\phi}:=\left[\cup_{i\ge 1}[\tau^{\phi}_{i},\vartheta^{\phi}_{i})\right]^{c}.
\ee

Let us now describe our controlled state process. 
Given some initial data $z:=(t, x) \in \Zb := [0,T]\x \R^{d}$,  and  $\phi \in \Phi^{m}$, we let $X^{z,\phi}$ 
be the unique strong solution on $[t,2T]$ of 
\begin{align}
		X = x&+  \left( \int_{t}^{\cdot}\1_{\Nc^{\phi}}(s)\mu\left(s,X_s\right)ds + \int_{t}^{\cdot}\1_{\Nc^{\phi}}(s)\sigma\left(s,X_{s}\right)dW_{s}   \right) \nonumber       \\
		&+  \sum_{i\ge 1}   \1_{\{t\le \vartheta^{\phi}_{i}\le \cdot\}} [ F(\tau^{\phi}_{i},X_{\tau^{\phi}_{i}-},   \alpha^{\phi}_{i}, \upsilon, \epsilon_{i})-X_{\tau^{\phi}_{i}-}]  .\label{eq: dyna X} 
\end{align}

Depending on the choice of the model, the different components of $X$ can be the {cumulative} gains of the  {algorithm}, the {number of holding shares}, {the} mid-price, the size and positions of the first bid and ask queues, the position of the trader in terms of limit orders in the different queues, a factor {driving} the system, a flow of external information provided by experts, the current market volume, etc. This can also be the time itself, if one wants to have time dependent dynamics.  This is quite flexible, and we will exemplify this in  Section \ref{sec: examples}.  
\vs2

In the above, the function 
\be\label{eq: hyp mu sigma F}
\begin{array}{c}
(\mu,\sigma,F) :  \R_{+}\times \R^{d} \x \Ab \times \U \times \Ec\mapsto  \R^{d}\times \mathbb{R}^{d\times d} \times \R^{d} \;\text{ is measurable.}
\\
\text{ The map $(\mu,\sigma)$ is  continuous, and   Lipschitz with linear growth  }\\
\text{in its second argument, uniformly in the first one.} 
\end{array}
\ee

 This dynamics means the following. When no action is currently made on the system, i.e.~on the intervals in  $\Nc^{\phi}$, the system evolves according to a stochastic differential equation  driven by the Brownian motion $W$:
$$
dX_{s}=\mu\left(s,X_s\right)ds+\sigma\left(s,X_{s}\right)dW_{s} \;\;\mbox{ on $\Nc^{\phi}$}.
$$
When an order is send at $\tau^{\phi}_{i}$, we freeze the dynamics up to the end of the action (of the robot, of the execution/cancellation of the order) at time $\vartheta^{\phi}_{i}$. This amounts to saying that we do not observe the current evolution up to $\vartheta^{\phi}_{i}$, or equivalently that no corrective action will be taken before the end of the already launched operation at $\vartheta^{\phi}_{i}$. At the end of the action, the  state process takes a new value 
$$
X_{\vartheta^{\phi}_{i}}=F(\tau^{\phi}_{i}, X_{\tau^{\phi}_{i}-},  \alpha^{\phi}_{i}, \upsilon, \epsilon_{i}), \mbox{   $i\ge 1$}. 
$$ 
The fact that $F$ depends on the unknown parameter $\upsilon$ and the additional noise $\epsilon_{i}$ models the fact that the correct model  is not known with certainty, and that the exact value of the unknown parameter $\upsilon$ can (possibly) not be measured precisely just by observing  $(\vartheta^{\phi}_{i}-\tau^{\phi}_{i},X_{\vartheta^{\phi}_{i}}-X_{\tau^{\phi}_{i}-})$. 
 \vs2
 
 In order to simplify the notations, we shall now write:
  \begin{align}\label{eq : def z'}
{\rm z}':=(\varpi,F).
\end{align} 

 From now on, we denote by $\F^{z,m,\phi}=(\Fc^{z,m,\phi}_{s})_{t\le s\le 2T}$ the $\P_{m}$-augmentation of the filtration generated by 
$(X^{z,\phi},\sum_{i\ge 1}\1_{[\vartheta_{i}^{\phi},\infty)})
$  
on $[t,2T]$. We say that $\phi \in \Phi^{m}$ belongs to $\Phi^{z,m}$ if $(\tau_{i}^{\phi})_{i\ge 1}$ is a sequence of $\F^{z,m,\phi}$-stopping times  and $\alpha^{\phi}_{i}$ is $\Fc^{z,m,\phi}_{\tau^{\phi}_{i}}$-measurable, for each $i\ge 1$.  Hereafter an admissible control will be an element of $\Phi^{z,m}$.

\subsection{Bayesian updates}

As already mentioned, acting on the system reveals some information on the true parameter value: the prior distribution evolves along time.   It should therefore be considered as a state variable to remain time consistent and be able to derive a dynamic programming equation. 
 Note also that its evolution can be of interest in itself. One can for instance be interested by the precision of our (updated) prior at the end of the control period, as it can serve as a new prior for another control problem.
 \vs2 
 
 In this section, we   describe   how it is updated with time, according to the usual Bayesian procedure. 
Given $z=(t,x)\in \Zb$, $u\in \U$ and $a\in \Ab$, we write the  law of ${\rm z}'[z,a,u,\epsilon_{1}]$, recall \reff{eq : def z'}, 
in the form 
	\[
		\qr(\cdot| z, a, u)d\Qr(\cdot|z, a),
	\]
in which $\qr(\cdot| \cdot)$ is a Borel measurable map and $\Qr(\cdot|z, a)$ is a dominating measure  on $\Zb$ for each $(z,a)\in \Zb\x \Ab$. This quantities can be inferred from the knowledge of $\zr'$ and the law of $\epsilon_{1}$, see Section \ref{sec: examples} for examples. 

Given initial conditions $z=(t,x)\in \Zb$, an initial prior $m\in \Mb$ and a trading strategy $\phi\in \Phi^{z,m}$, the conditional law  $M^{z,m,\phi}$ of $\upsilon$  at time $s\ge t$ is given by   
\be\label{eq: def M} 
M^{z,m,\phi}_{s}[C]:=\P_{m}[\upsilon \in C|\Fc^{z,m,\phi}_{s}],\;\;{ C\in \Bc(\U) }. 
\ee
As  no new information is revealed in between the end of an order and the start of the next one, the prior should remain constant on these time intervals: 
\begin{align}\label{eq: dyna m out of vartheta}
M^{z,m,\phi}=M^{z,m,\phi}_{\vartheta^{\phi}_{i}} \mbox{ on } [\vartheta^{\phi}_{i},\tau^{\phi}_{i+1}) \;,\;\;i\ge 0,
\end{align}
with the conventions $\vartheta^{\phi}_{0}=0$ and $M^{z,m,\phi}_{0}=m$. But, $M^{z,m,\phi}$  jumps at each time $\vartheta^{\phi}_{i}$ at which the effect of the last sent order is revealed, thus bringing a new information on the unknown parameter $\upsilon$. This prior update follows the classical  Bayes rule\footnote{In order to ensure that $M^{z,m,\phi}$ remains in $\Mb$ whenever $m\in \Mb$, we assume that 
$
{\mathfrak M}(\Mb;\cdot)\subset \Mb.
$}:
\begin{align} 
M^{z,m,\phi}_{\vartheta^{\phi}_{i}}  &={\mathfrak M}(M^{z,m,\phi}_{\tau^{\phi}_{i}-};Z^{z,\phi}_{\vartheta^{\phi}_{i}},Z^{z,\phi}_{\tau^{\phi}_{i}-},\alpha^{\phi}_{i}),\;\;i\ge 1,\label{eq: dyna m}
\end{align}
in which 
\be\label{eq: def M mathfrak}
{\mathfrak M}(m_{o};z'_{o},z_{o},a_{o})[C]:=\frac{\int_{C}  \qr(z'_{o}|z_{o},a_{o}, u) dm_{o}(u)}
{\int_{\U}  \qr(z'_{o}|z_{o},a_{o}, u) dm_{o}(u)},
\ee
for $(z_{o},z'_{o},a_{o},m_{o})\in \Zb^{2}\x \Ab\x \Mb$ and $C\in \Bc(\U)$. We refer to \cite{BBD16} for a formal proof of this intuitive fact.

\begin{Remark}\label{rem: measure abso cont} Again, the parameter is unknown, but we can have an idea of what values are more likely to be correct. As can be seen from \reff{eq: def M mathfrak}, $M^{z,m,\phi}$ remains absolutely continuous with respect to $m$ over time. The prior distribution should therefore have a support large enough  to include the true value, otherwise it will not be seen by the {\sl a posteriori} distributions as well. In practice, one can simply fix a uniform distribution on a rectangle, to which we are certain that the true parameter  belongs. If one is only interested by a crude approximation, but want to minimize the computation time, one can simply specify several low/medium/high values and concentrate the support of  $m_{0}$  on these (i.e.~start with a combination of Dirac masses). 
\end{Remark}

 \begin{Remark}\label{rem: joint condi distri ZM}
For later use, note  that the above provides  the joint conditional distribution of  $(Z^{z,\phi}_{\vartheta^{\phi}_{i}},M^{z,m,\phi}_{\vartheta^{\phi}_{i}})$ given ${\Fc^{z,m,\phi}_{\tau_{i}}}$:
 \begin{equation}\label{eq: joint cond law X,M}
\P[(Z^{z,\phi}_{\vartheta^{\phi}_{i}},M^{z,m,\phi}_{\vartheta^{\phi}_{i}})\in B\x D |{\Fc^{z,m,\phi}_{\tau^{\phi}_{i}-}}]= \kr(B\x D|Z^{z,\phi}_{\tau^{\phi}_{i}-}, M^{z,m\phi}_{\tau^{\phi}_{i}-},\alpha^{\phi}_{i})
 \end{equation}
 in which
 \b*
 \kr(B\x D| z_{o},m_{o},a_{o}):=
\int_{\U} \int_{B}  \1_{D}({\mathfrak M}(m_{o};z',z_{o},a_{o})) \qr(z'|z_{o},a_{o}, u)d\Qr(z'|z, a) dm_{o}(u).
 \e*
 
\end{Remark}

	\subsection{Gain function}

Given $z=(t,x) \in \Zb$ and $m\in \Mb$, the aim of the controller is to maximize the expected value of the gain functional\footnote{ $g$ is assumed to be measurable and (for simplicity)  bounded on $\Zb \times \Mb \times \U\times \Ec$. }
    \[
        \phi \in \Phi^{z,m} \mapsto {G}^{z,m}(\phi) := g(Z^{z,\phi}_{\T[\phi]}, M^{z,m,\phi}_{\T[\phi]}, \upsilon, \epsilon_{0}),
    \]
 in which $\T[\phi]$ is the end of the last action after $T$:
 $$
 \T[\phi]:=\sup\{\vartheta^{\phi}_{i}: i\ge 1,\; \tau^{\phi}_{i}\le T\}\vee T.
 $$   
 Note that we do not look at the value of $Z^{z,\phi}$ at $T$ but rather at $\T[\phi]$ which is either $T$ or the end of the last trade {sent} before $T$. This is motivated by the use of robots: we do not want to stop it at $T$ if it is running, we rather prefer to wait till the end of the algorithm. This is compensated by the fact that a penalty can be imposed when $T[\phi]$ is strictly larger that $T$, through the objective function $g$.  Also note that the terminal reward depends on the parameter $\upsilon$. This is motivated by applications to optimal liquidation in which a final {\sl {large}} order may be {sent} at the end, to liquidate {\sl immediately} the remaining {shares}. 

 As suggested earlier,  the gain may not only depend on the value of the original time-space state process $Z^{z,\phi}_{\T[\phi]}$ but also on $M^{z,m,\phi}_{\T[\phi]}$, to model the fact that we are also interested by the precision of the estimation made on $\upsilon$ at the final time.   Also note that one could add a running cost without additional difficulty, it can actually be incorporated into the state process $Z^{z,\phi}$.
 \vs2

Given $\phi \in \Phi^{z,m}$, the expected reward is 
    \[
        J(z, m; \phi) := \mathbb{E}_{m}\left[ G^{z,m}(\phi)\right],
    \]
and 
    \begin{align}\label{eq: def vr}
        \vr(z,m) := \sup_{\phi \in \Phi^{z,m}}J(z, m; \phi)\1_{\{t\le T\}}+\1_{\{t> T\}}\mathbb{E}_{m}\left[ g(z, m, \upsilon, \epsilon_{0})\right] 
    \end{align}
   is the corresponding value function. Note that $\vr$ depends on $m$ through the set of admissible controls $ \Phi^{z,m}$ and the expectation operator $ \mathbb{E}_{m}$, even if $g$ does not depend on $M^{z,m,\phi}_{\T[\phi]}$. 
  \section{Value function characterization and numerical approximation}

\subsection{The dynamic programming quasi-variational equation}
The aim of this section is to explain how one can derive a pde  characterization of the optimal expected gain. 
As usual, it should be related to a  dynamic programming principle. In our setting, it should read as follows: Given $z=(t,x) \in \Zb$ and $m\in \Mb$, then 
	\begin{equation}\label{eq: DPP formal}
		\vr(z,m)= \sup_{\phi \in \Phi^{z,m}}\E_{m} [\vr(Z^{z,\phi}_{\theta^{\phi}},M^{z,m,\phi}_{\theta^{\phi}})],
	\end{equation}
for all collection $(\theta^{\phi},\phi\in \Phi^{z,m})$  of $\F^{z,m,\phi}$-stopping times with values in $[t, 2T]$ such that 
$$
\theta^{\phi}\in \Nc^{\phi}\cap [t,\T[\phi]]~\P_{m}-\as \;,
$$
recall the definition of $\Nc^{\phi}$ in \reff{eq: def Nphi}. 

Let us comment this. First, one should restrict to stopping times such that $\theta^{\phi}\in \Nc^{\phi}$. The reason is that no new impulse can be made outside of $\Nc^{\phi}$, each interval $[\tau_{i}^{\phi},\vartheta^{\phi}_{i})$ is a latency period. Second, the terminal gain is evaluated at $\T[\phi]$, which in general is different from $T$. Hence, the fact that $\theta^{\phi}$ is only bounded by $\T[\phi]$.

\vs2
We continue our discussion, assuming that \reff{eq: DPP formal} holds and that $\vr$ is sufficiently smooth.
Let us denote $Z^{z,\circ}$ the dynamics of the state process when no order is sent. Then, the above implies in particular
$$
\vr(z,m)\geq  \E_{m} [\vr(Z^{z,\circ}_{t+h},m)]
$$ 
for $0<h\le T-t$. This corresponds to the sub-optimality   of the control consisting in making no impulse on  $[t,t+h]$. Applying It\^{o}'s lemma, dividing by $h$ and letting $h$ go to $0$, we obtain 
$$
-\Lc\vr(z,m)\ge 0 
$$
in which  
 $$
 \Lc\vp:=\partial_{t}\vp  +\langle \mu, D\vp\rangle + \frac12 {\rm Tr}[\sigma\sigma^{\top} D^{2}\vp]. 
 $$
 On the other hand, it follows from \reff{eq: DPP formal} and Remark \ref{rem: joint condi distri ZM} that 
\begin{align*}
 \vr(z,m)&\geq  \sup_{a\in \Ab}\E_{m} [\vr({\rm z}'[z,a,\upsilon,\epsilon_{1}],{\mathfrak M}(m;{\rm z}'[z,a,\upsilon,\epsilon_{1}],z,a)) ] = \Kc\vr(z,m)
\end{align*}
where 
\begin{align}\label{eq : def Kc}
\Kc \vp:=\sup_{a\in \Ab}\int \vp(z',m')d\kr(z',m'|\cdot,a).
 \end{align}
 This corresponds to the sub-optimality of sending an order immediately. 
 As for the time-$T$ boundary condition, the same reasoning as above implies 
 $$
 \vr(T,\cdot)\ge \Kc_{T}g\;\;\mbox{ and } \;\;\vr(T,\cdot)\ge \Kc\vr(T,\cdot), 
 $$
 in which 
\begin{align}\label{eq : def KcT}
 \Kc_{T}g(\cdot,m)=\int_{\U}\int_{\Ec} g(\cdot,m,u,e)d\P_{\epsilon}(e)dm(u).
\end{align} 
 
 By optimality, $\vr$ should therefore solve the quasi-variational equations
 \begin{align}
 \min\left\{-\Lc\vp\;,\;\vp-\Kc\vp\right\}=0 &\;\mbox{ on } [0,T)\x \R^{d}\x\Mb\label{eq: pde interior}\\
  \min\left\{\vp- \Kc_{T}g,\vp- \Kc \vp\right\}=0 &\;\mbox{ on } \{T\}\x \R^{d}\x\Mb.\label{eq: pde T} 
  \end{align}

To ensure that the above operator is continuous, we assume   that, on  $\R_{+}\x \R^{d}\x \Mb$, 
\be\label{eq: hyp conti KcT et Kc}
\begin{array}{c}
\text{$\Kc_{T}g$ is continuous, and $\Kc\vp$ is upper- (resp.~lower-) semicontinuous,}\\
\text{for all   upper- (resp.~lower-) semicontinuous bounded function $\vp$.}
\end{array}
\ee

Finally, we assume that comparison holds for \reff{eq: pde interior}-\reff{eq: pde T}. A sufficient condition is provided in  \cite{BBD16}.

\begin{Assumption}\label{ass: comp}
Let $U$ (resp.~$V$) be a upper- (resp.~lower-) semicontinuous bounded viscosity sub- (resp.~super-) solution of  \reff{eq: pde interior}-\reff{eq: pde T}. Assume further that $U \le V$ on $(T,\infty)\x \R^{d}\x \Mb$.
Then, $U \le V$ on $\Zb\x \Mb$.
\end{Assumption}

We can now state the main result of \cite{BBD16}.   

\begin{Theorem}[\cite{BBD16}]\label{thm:viscosity}
Let Assumption \ref{ass: comp} hold. Then,  $\vr$ is continuous on $\Zb\x \Mb$ and is the unique bounded viscosity solution of \reff{eq: pde interior}-\reff{eq: pde T}.
\end{Theorem}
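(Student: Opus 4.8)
The plan is the classical program for impulse-control problems, carried out in the enlarged state $(Z,M)$ on $\Zb\times\Mb$: (i) establish a rigorous dynamic programming principle, namely \reff{eq: DPP formal}; (ii) deduce that the upper-semicontinuous envelope $\vr^{*}$ is a viscosity subsolution and the lower-semicontinuous envelope $\vr_{*}$ a viscosity supersolution of \reff{eq: pde interior} (and of the terminal relation \reff{eq: pde T}); (iii) invoke Assumption \ref{ass: comp} to obtain $\vr^{*}\le\vr_{*}$; (iv) conclude, since $\vr_{*}\le\vr\le\vr^{*}$ trivially, that $\vr=\vr_{*}=\vr^{*}$ is continuous and is the unique bounded viscosity solution. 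Boundedness of $\vr$ is immediate from the standing boundedness of $g$. For $t>T$ the definition \reff{eq: def vr} gives $\vr(z,m)=\E_{m}[g(z,m,\upsilon,\epsilon_{0})]=\Kc_{T}g(z,m)$, which is continuous by \reff{eq: hyp conti KcT et Kc}; this identifies $\vr^{*}=\vr_{*}=\vr$ on $(T,\infty)\times\R^{d}\times\Mb$, so that the extra hypothesis in Assumption \ref{ass: comp} is met.

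The core analytic input is the DPP. The viewpoint is that $(Z^{z,\phi},M^{z,m,\phi})$ is a controlled piecewise-diffusion on $\Zb\times\Mb$: on $\Nc^{\phi}$ it diffuses under the generator $\Lc$ with the $M$-component frozen (by \reff{eq: dyna m out of vartheta}), and at each latency-end time $\vartheta^{\phi}_{i}$ it jumps according to the transition kernel $\kr$ of Remark \ref{rem: joint condi distri ZM}, whose crucial feature is that the post-jump conditional law depends on the past only through $(Z^{z,\phi}_{\tau^{\phi}_{i}-},M^{z,m,\phi}_{\tau^{\phi}_{i}-},\alpha^{\phi}_{i})$ — i.e. $(Z,M)$ is genuinely Markov. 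The inequality $\vr(z,m)\ge\E_{m}[\vr(Z^{z,\phi}_{\theta^{\phi}},M^{z,m,\phi}_{\theta^{\phi}})]$ follows by conditioning at $\theta^{\phi}\in\Nc^{\phi}\cap[t,\T[\phi]]$, the tower property, and a concatenation argument: grafting onto $\phi$ an $\eps$-optimal control for the problem restarted from $(Z^{z,\phi}_{\theta^{\phi}},M^{z,m,\phi}_{\theta^{\phi}})$ produces an admissible control whose reward is $\ge\E_{m}[\vr(\cdot,\cdot)]-\eps$. The reverse inequality follows by splitting an arbitrary admissible $\phi$ at $\theta^{\phi}$: its continuation is admissible for the shifted problem, so its conditional expected reward is bounded by $\vr(Z^{z,\phi}_{\theta^{\phi}},M^{z,m,\phi}_{\theta^{\phi}})$. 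The regularity needed to run the measurable-selection step is supplied by continuity of $z\mapsto X^{z,\phi}$ (from \reff{eq: hyp mu sigma F}), measurability of $\qr$, and the semicontinuity properties \reff{eq: hyp conti KcT et Kc} of $\Kc$ and $\Kc_{T}g$.

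Given the DPP, the PDE derivation is exactly the informal one preceding the statement, recast in viscosity form. For the supersolution property of $\vr_{*}$: applying \reff{eq: DPP formal} with the no-impulse control on $[t,t+h]$ against a smooth test function touching $\vr_{*}$ from below, Itô's formula, division by $h$, and $h\downarrow0$ give $-\Lc\vp\ge0$; applying it with an immediate impulse of arbitrary $a\in\Ab$ and the kernel $\kr$ gives $\vr_{*}\ge\Kc\vr_{*}$ (lower semicontinuity of $\Kc\vp$ is used here). For the subsolution property of $\vr^{*}$: the usual contradiction argument — at an interior maximum of $\vr^{*}-\vp$ one cannot have simultaneously $-\Lc\vp>0$ and $\vr^{*}-\Kc\vr^{*}>0$, for then both waiting and impulsing would be strictly suboptimal, contradicting \reff{eq: DPP formal}; upper semicontinuity of $\Kc\vp$ localises the argument. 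The terminal relations \reff{eq: pde T} at $t=T$ are obtained the same way, now with $\Kc_{T}g$ and using that $\T[\phi]\ge T$, so an impulse is still admissible at time $T$. Finally, Assumption \ref{ass: comp} applied to the u.s.c. subsolution $\vr^{*}$ and the l.s.c. supersolution $\vr_{*}$ — equal on $(T,\infty)\times\R^{d}\times\Mb$ as noted — yields $\vr^{*}\le\vr_{*}$ on $\Zb\times\Mb$, hence $\vr=\vr^{*}=\vr_{*}$ is continuous; that it is the \emph{unique} bounded viscosity solution then follows by applying comparison to any other such solution and its envelopes.

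The main obstacle is step (i), the rigorous DPP. Unlike the textbook setting, the enlarged state lives partly in the space $\Mb$ of probability measures (only locally compact, and generically infinite-dimensional), the admissible impulse times are interleaved with uncontrollable latency windows $[\tau^{\phi}_{i},\vartheta^{\phi}_{i})$ determined through $\varpi$ by the unknown $\upsilon$ and the noise $\epsilon_{i}$, and the terminal horizon $\T[\phi]$ is itself control-dependent; this forces careful measurable selection, a clean stability/regularity statement for the value along the controlled flow, and a concatenation lemma respecting the latency structure and the filtration $\F^{z,m,\phi}$. Once the DPP and the continuity inputs \reff{eq: hyp mu sigma F}--\reff{eq: hyp conti KcT et Kc} are in place, steps (ii)--(iv) are a by-now-routine, if lengthy, adaptation of the standard viscosity machinery for quasi-variational inequalities; as the authors do, I would carry out (i) in detail and refer to \cite{BBD16} for the remaining verifications.
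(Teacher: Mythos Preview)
The paper does not prove this theorem at all: it is stated as ``the main result of \cite{BBD16}'' and the proof is deferred entirely to that companion paper. So there is no proof here to compare against; you have correctly reconstructed the standard program (DPP $\Rightarrow$ sub/supersolution properties of $\vr^{*},\vr_{*}$ $\Rightarrow$ comparison via Assumption \ref{ass: comp} $\Rightarrow$ continuity and uniqueness), and your identification of the rigorous DPP on the enlarged state $(Z,M)\in\Zb\times\Mb$ with latency windows and control-dependent horizon $\T[\phi]$ as the genuine technical obstacle is exactly right --- this is what \cite{BBD16} is devoted to.

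One small correction to your write-up: in your last sentence you say you would ``carry out (i) in detail and refer to \cite{BBD16} for the remaining verifications,'' but the present paper does the opposite --- it carries out nothing and refers to \cite{BBD16} for everything. If you are matching the paper, the honest statement is simply that the theorem is quoted from \cite{BBD16}.
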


 \subsection{An example of numerical scheme}\label{subsec: numerical resolution}

When the comparison result of Assumption \ref{ass: comp} holds, one can easily derive a convergent finite different scheme for \reff{eq: pde interior}-\reff{eq: pde T}.

\vs2

We  consider here a simple explicit scheme based on \cite{camilli1995approximation, camilli2009finite}. We let $h_{0}$ be a time-discretization step so that $T/h_{0}$ is an integer, and set $\mathbf{T}^{h_{0}} := \{t_{j}^{h_{0}} := jh_{0}, j \leq T/h_{0}\}$.  The space $\mathbb{R}^{d}$ is discretized with a space step $h_{1}$ on a rectangle $[-c,c]^{d}$, containing $N_{h_{1}}^{x}$ points on each direction.   The corresponding finite set is denoted by $\Xb^{h_{1}}_{c}$. 

The first order derivatives  $\partial_{t} \vp$ and $(\partial\vp/\partial x^{i})_{i\le d}$ are approximated by using the standard up-wind approximations:
\begin{align*}
 \Delta^{h_{0}}_{t}\vp(t,x,m)&:=h_{0}^{-1}(\vp(t+h_{0},x,m)-\vp(t,x,m))\\
 \Delta^{h_{0}}_{h_{1},i}\vp(t,x,m)
&:=\left\{ \begin{array}{lcl}
h_{1}^{-1}(\vp(t+h_{0},x+e_{i}h_{1},m)-\vp(t,x,m)) & \mbox{if} & \mu^{i}(x)\ge 0\\
h_{1}^{-1}(\vp(t,x,m)-\vp(t+h_{0},x-e_{i}h_{1},m)) & \mbox{if} & \mu^{i}(x)< 0,
\end{array} \right.
\end{align*}
in which $e_{i}$ is $i$-th unit vector of $\R^{d}$. 
  
As for the second order term, we use the fact that each point $ x\in \R^{d}$ can be approximated as a weighted combination
$$
  x=\sum_{  x'\in C_{h_{1}}(  x)} x' \omega(  x'|  x)
$$ 
of the points $  x'$ lying on the corners $C_{h_{1}}(   x)$ of the cube formed by the partition of $\R^{d}$ it belongs too. Then, given another small parameter $h_{2}>0$, we approximate
${\rm Tr}[\sigma(x)\sigma(x)^{\top} D^{2}\vp(t,x,m)]$ by ${\rm T}_{h_{0}, h_{1}}^{h_{2}}[\vp](t,x,m)$ defined as
\begin{align*}
(h_{2}d)^{-1} \sum_{i=1}^{d}& [\vp]_{h_{1}}(t+h_{0},x+\sqrt{h_{2}} \sigma^{  i}(x),m)+[\vp]_{h_{1}}(t+h_{0},x-\sqrt{h_{2}}\sigma^{  i}(x),m)\\
&   -2h_{2}^{-1} \vp(t,x,m)
\end{align*}
in which $\sigma^{i}$ is the $i$-th column of $\sigma$ and 
$$
 [\vp]_{h}(t,x,m):=  \sum_{  x'\in C_{h_{1}}(  x)}  \omega(  x'|  x) \vp([t]_{h},x',m) \mbox{ with } [t]_{h}:=\min [t,2T]\cap \left(\mathbf{T}^{h_{0}}\cup [T,2T] \right), 
$$
is a piecewise linear approximation of $\vp$. In the case where only the first row $\sigma^{ 1 \cdot}$ of $\sigma$ is not identically equal to $0$, one can use the usual simpler approximation 
\begin{align*}
&(h_{1})^{-1} \|\sigma^{1\cdot}\|^{2}\left( \vp(t+h_{0},x+\sqrt{h_{1}} e_{1},m)+\vp(t+h_{0},x-\sqrt{h_{1}}e_{1},m)\right)\\
&-2(h_{1})^{-1}\|\sigma^{1\cdot}\|^{2}  \vp(t,x,m).
\end{align*}

Similarly, we  approximate $\Kc \vp$ by 
\begin{align*}
  \Kc_{h_{0},h_{1}}\vp(t,x,m)&:=\sup_{a\in \Ab} \int [\vp]_{h}(\max(t+ h_{0}, t'),x',m')d\kr( t', x', m'|t,x,m,a).
\end{align*}
Letting $h:=(h_{0},h_{1},h_{2})$, and setting
\begin{align}\label{scheme_L}
\Lc^{h}\vp&:= \Delta^{h_{0}}_{t}\vp+\sum_{i\le d} \mu^{i}  \Delta^{h_{0}}_{h_{1},i}\vp+ \frac12 {\rm T}_{h_{0}, h_{1}}^{h_{2}}[\vp],
\end{align}
our numerical scheme consists in solving
 \begin{align}
 \min\left\{-\Lc^{h}\vp\;,\;\vp-\Kc_{h_{1}}\vp\right\}=0 &\;\mbox{ on } (\mathbf{T}^{h_{0}}\setminus\{T\})\x (\Xb^{h_{1}}_{c}\setminus\partial \Xb^{h_{1}}_{c})\x\Mb,\label{eq : num pde}\\
\min\{\vp-\Kc_{T}g\;,\;\vp-\Kc_{h_{1}} \vp\}=0 &\;\mbox{ on } \{T\}\x (\Xb^{h_{1}}_{c}\setminus\partial \Xb^{h_{1}}_{c})\x\Mb,\label{eq: num pde T}
\\
\vp-\Kc_{T}g:=0 &\;\mbox{ on } ([0,T]\x \partial \Xb^{h_{1}}_{c}\x\Mb)\cup ((T,2T]\x \R^{d}\x \Mb) .\label{eq: num pde bord X}
 \end{align}
 We specify here a precise boundary condition on $ \partial \Xb^{h_{1}}_{c}$ but any other (bounded) boundary condition could be used. Finally, we extend $\vr^{c}_{h}$ to the whole space by setting $\vr^{c}_{h}=[\vr^{c}_{h}]_{h}$ on  $[0,T]\x \R^{d}\x\Mb$. 
 
This scheme is always convergent as $(h_{2},h_{1}/h_{2},h_{0}/h_{1})\to 0$ and $c\to \infty$.

\begin{Proposition} Let $\vr^{c}_{h}$ denote the solution of \reff{eq : num pde}-\reff{eq: num pde T}-\reff{eq: num pde bord X}. If Assumptions \ref{ass: comp} holds, then $\vr^{c}_{h}\to \vr$ as $(h_{2},h_{1}/h_{2},h_{0}/h_{1})\to 0$ and then $c\to \infty$. 
\end{Proposition}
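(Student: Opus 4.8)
The plan is to invoke the general Barles--Souganidis framework for convergence of monotone numerical schemes: it suffices to verify that the scheme defined by \reff{eq : num pde}-\reff{eq: num pde T}-\reff{eq: num pde bord X} is (i) monotone, (ii) consistent with the quasi-variational equation \reff{eq: pde interior}-\reff{eq: pde T}, and (iii) stable (uniformly bounded), and then conclude by the comparison principle of Assumption \ref{ass: comp}. Concretely, I would first introduce the operator form of the scheme, writing it as $S(h,c;t,x,m,\vr^{c}_{h}(t,x,m),\vr^{c}_{h})=0$, where the last argument records the dependence on the values of $\vr^{c}_{h}$ at neighbouring grid points and later times. Boundedness of $\vr^{c}_{h}$ follows immediately from the boundedness of $g$ (hence of $\Kc_{T}g$) together with the fact that the scheme is an explicit backward recursion in time: at the terminal layer $\{T\}\times\Xb^{h_{1}}_{c}$ and on the lateral/temporal boundary, $\vr^{c}_{h}$ equals $\Kc_{T}g$, which is bounded by $\|g\|_{\infty}$; then an induction over the time layers $t_{j}^{h_{0}}$, using that $\Lc^{h}$ expresses $\vr^{c}_{h}(t,x,m)$ as a convex combination of values at time $t+h_{0}$ (this is where the CFL-type condition $(h_{2},h_{1}/h_{2},h_{0}/h_{1})\to 0$ enters) and that $\Kc_{h_{0},h_{1}}$ is an average against the probability kernel $\kr$, shows $\|\vr^{c}_{h}\|_{\infty}\le\|g\|_{\infty}$ for $h$ small enough.

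Next I would check monotonicity. The up-wind choice of $\Delta^{h_{0}}_{h_{1},i}$ guarantees that, after rearranging $-\Lc^{h}\vp=0$ to express $\vp(t,x,m)$, all coefficients multiplying the time-$(t+h_{0})$ values are nonnegative precisely when $h_{0}/h_{1}$ and $h_{1}/h_{2}$ (and $h_{2}$) are small enough; the second-order term ${\rm T}_{h_{0},h_{1}}^{h_{2}}$ contributes nonnegative weights $\omega(x'|x)$ by construction of the barycentric interpolation; and the piecewise-linear reconstruction $[\cdot]_{h}$ preserves order because the weights $\omega(\cdot|\cdot)$ are nonnegative and sum to one. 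The obstacle term $\vp-\Kc_{h_{0},h_{1}}\vp$ is monotone because $\Kc_{h_{0},h_{1}}$ is a supremum of integrals of $[\vp]_{h}$ against the nonnegative measure $\kr$, hence order-preserving, and taking a supremum preserves monotonicity. I would also note that the Bayesian component causes no difficulty: the kernel $\kr(\cdot|z,m,a)$ acts only as an averaging operator in the $(x,m)$ variables, so it is automatically monotone, and the assumption ${\mathfrak M}(\Mb;\cdot)\subset\Mb$ ensures the scheme stays well-defined on $\Mb$.

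Then comes consistency. Away from the boundary, a Taylor expansion of a smooth test function $\vp$ shows $\Delta^{h_{0}}_{t}\vp\to\partial_{t}\vp$, $\mu^{i}\Delta^{h_{0}}_{h_{1},i}\vp\to\mu^{i}\partial_{x^{i}}\vp$ (using $h_{0}/h_{1}\to0$ to control the mixed time-space error of the up-wind scheme), and ${\rm T}_{h_{0},h_{1}}^{h_{2}}[\vp]\to{\rm Tr}[\sigma\sigma^{\top}D^{2}\vp]$ — the latter is the standard "generalized finite difference" consistency computation, where the error from replacing $\vp(t+h_{0},x\pm\sqrt{h_{2}}\sigma^{i})$ by its barycentric interpolant is $O(h_{1}^{2}/h_{2})$, hence vanishes under $h_{1}/h_{2}\to0$, and the error from evaluating at $t+h_{0}$ instead of $t$ is $O(h_{0}/h_{2})$, controlled similarly. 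For the obstacle operator, the continuity assumption \reff{eq: hyp conti KcT et Kc} together with the substitution $[\vp]_{h}\to\vp$ and $\max(t+h_{0},t')\to t'$ as $h\to0$ gives $\Kc_{h_{0},h_{1}}\vp\to\Kc\vp$; here one uses that $\vp$ is bounded and that the kernel $\kr$ is a fixed probability measure, so dominated convergence applies. On $\{T\}$ the terminal relation $\vr-\Kc_{T}g=0$ or $\vr-\Kc\vr=0$ is reproduced in the limit, and the artificial lateral boundary condition on $\partial\Xb^{h_{1}}_{c}$ becomes irrelevant after the limit $c\to\infty$ (formally, one checks that for any fixed $(t,x,m)$ the boundary layer is eventually outside any ball, and then the relaxed semi-limits satisfy the equation on all of $\Zb\times\Mb$). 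With monotonicity, stability and consistency in hand, the half-relaxed limits $\overline{\vr}:=\limsup^{*}\vr^{c}_{h}$ and $\underline{\vr}:=\liminf_{*}\vr^{c}_{h}$ are respectively a bounded USC subsolution and a bounded LSC supersolution of \reff{eq: pde interior}-\reff{eq: pde T}, they coincide on $(T,\infty)\times\R^{d}\times\Mb$ (both equal $\Kc_{T}g$ there by the boundary prescription), so Assumption \ref{ass: comp} yields $\overline{\vr}\le\underline{\vr}$, whence $\vr^{c}_{h}$ converges locally uniformly to the unique solution $\vr$ of Theorem \ref{thm:viscosity}.

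The main obstacle I anticipate is the consistency of the second-order term ${\rm T}_{h_{0},h_{1}}^{h_{2}}$ combined with the piecewise-linear reconstruction $[\cdot]_{h}$: one must simultaneously track three error sources — the $O(h_{2})$ truncation of the symmetric difference quotient in the $\sqrt{h_{2}}$ direction, the $O(h_{1}^{2}/h_{2})$ interpolation error of replacing point values by barycentric averages, and the $O(h_{0}/h_{2})$ error of freezing time at $t+h_{0}$ — and show they all vanish under the prescribed scaling $(h_{2},h_{1}/h_{2},h_{0}/h_{1})\to0$; this also feeds into monotonicity, since the sign conditions on the scheme's coefficients hold only in that regime. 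The Bayesian updating layer, by contrast, is benign: it only introduces an extra averaging operator that is automatically monotone and, by \reff{eq: hyp conti KcT et Kc}, consistent, so it adds no essential difficulty beyond bookkeeping. A secondary technical point is justifying that the relaxed limits do not "see" the artificial spatial boundary once $c\to\infty$, which is handled by the usual argument that the scheme's solution is bounded uniformly in $c$ and $h$, so the boundary layer recedes to infinity.
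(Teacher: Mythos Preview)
Your approach is the paper's: verify the Barles--Souganidis hypotheses (monotonicity, stability, consistency) and conclude via the comparison principle of Assumption \ref{ass: comp}. The paper's proof is in fact far terser than your outline---it simply cites \cite{BaSo91}---but it isolates two technical points that your sketch passes over and that are worth flagging.

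First, your consistency check for the nonlocal part is stated for smooth test functions (``$\Kc_{h_{0},h_{1}}\vp\to\Kc\vp$ by dominated convergence''), but that is not what the Barles--Souganidis argument uses. In the viscosity definition of \reff{eq: pde interior}-\reff{eq: pde T} the obstacle condition reads $\underline{\vr}-\Kc\underline{\vr}\ge 0$, with $\underline{\vr}$ itself inside the operator, not a test function. Hence, along an approximating sequence $(z_n,m_n)$ of minima of $\vr^{c}_{h_n}-\vp$, one must pass to the limit in $\vr^{c}_{h_n}(z_n,m_n)\ge \Kc_{h_{0},h_{1}}\vr^{c}_{h_n}(z_n,m_n)$, which requires a statement of the form
\[
\liminf_{n\to\infty,\,h\to 0}\Kc_{h_{0},h_{1}}u_{n}(z_{n},m_{n})\ \ge\ \Kc u_{\circ}(z_{\circ},m_{\circ}),
\qquad u_{\circ}:=\liminf_{(z',m',n)\to(\cdot,\infty)}u_{n}(z',m'),
\]
for arbitrary bounded sequences $(u_n)$. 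This is the content of the paper's Lemma~\ref{ass : passage limite ope non local discretisation}; its proof combines \reff{eq: hyp conti KcT et Kc} with a monotone--convergence argument, not just dominated convergence on a fixed $\vp$. Your monotonicity observation for $\Kc_{h_{0},h_{1}}$ is the right ingredient, but the conclusion you need is this relaxed-limit stability, not pointwise consistency on test functions.

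Second, the state space carries the factor $\Mb$, which is only assumed locally compact, so the standard lemma producing approximating minima of $u_{n}-\vp$ (Barles' Lemma~6.1) must be restated in this setting; the paper records this as Lemma~\ref{lem: approx stabi}. You do not mention this, and it is precisely the replacement the paper makes for ``assertion (2.7)'' of \cite{BaSo91}. Neither point invalidates your plan, but both should appear as explicit lemmas rather than be absorbed into ``monotonicity, stability and consistency''.
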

\proof	Using Lemma \ref{ass : passage limite ope non local discretisation} below, one easily checks that our scheme satisfies the conditions of   \cite[Theorem 2.1.]{BaSo91}. In particular,   $|\vr^{c}_{h}|\le \sup |g|<\infty $. Then, the convergence holds by the same arguments as in  \cite[Theorem 2.1.]{BaSo91}, it suffices to replace their assertion (2.7)  by Lemma \ref{lem: approx stabi} stated below. \ep

\begin{Remark} We did not discuss in the above the problem of the discrete approximation of $\Mb$. Applications will typically be based on a parameterized family $\Mb=\{m_{\theta},\theta \in \Theta\}$, for a subset $\Theta$ of a finite dimensional space.  We can then further approximate $\Theta$ by a sequence of finite sets to build up a numerical scheme.   Similarly, the set of control values $\Ab$ need to be approximated in practice. If the corresponding sequences of approximations are dense, then convergence of the numerical scheme will still hold. \end{Remark}

We conclude this section with the technical lemmas that were used in the above proof. 
 
 \begin{Lemma}\label{ass : passage limite ope non local discretisation} If  $(u_{n})_{n\ge 1}$ is a   bounded sequence of functions on  $\Zb\x \Mb$ and $(z_{n},m_{n})_{n\ge 1}$ is a sequence in $\Zb\x \Mb$ that converges to $(z_{\circ},m_{\circ})$, then 
$$
\liminf_{\tiny \begin{array}{c}n\to \infty\\(h_{0}, h_{1})\to (0,0)\end{array}}  \Kc_{h_{0}, h_{1}} u_{n}(z_{n},m_{n})\ge \Kc u_{\circ}(z_{\circ},m_{\circ})
\;\;\;,\mbox{where } u_{\circ}:=\liminf \limits_{\tiny \begin{array}{c}n\to \infty \\ (z',m')\to \cdot \end{array} } u_{n}(z',m'), 
$$
and
$$
\limsup_{\tiny \begin{array}{c}n\to \infty\\(h_{0}, h_{1})\to (0,0)\end{array}}  \Kc_{h_{0}, h_{1}} u_{n}(z_{n},m_{n})\le \Kc u^{\circ}(z_{\circ},m_{\circ})
\;\;\;,\mbox{where } u^{\circ}:=\limsup \limits_{\tiny \begin{array}{c}n\to \infty \\ (z',m')\to \cdot \end{array} } u_{n}(z',m').
$$
\end{Lemma}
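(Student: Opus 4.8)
The plan is to prove the two inequalities separately, both by a standard diagonal extraction argument combined with the upper-/lower-semicontinuity assumption \reff{eq: hyp conti KcT et Kc} on $\Kc$. I will detail the $\liminf$ inequality; the $\limsup$ inequality is entirely symmetric. First I would fix a subsequence (not relabelled) along which $\Kc_{h_{0},h_{1}} u_{n}(z_{n},m_{n})$ converges to its $\liminf$. For each $n$, the supremum defining $\Kc_{h_{0},h_{1}} u_{n}(z_{n},m_{n})$ is over the compact set $\Ab$, so by compactness I may pick a near-optimal (or, after a further extraction, exactly optimal in the limit) control value $a_{n}\in\Ab$ with $a_{n}\to a_{\circ}\in\Ab$, along a further subsequence. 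The quantity to control is then
$$
\int [u_{n}]_{h}(\max(t_{n}+h_{0},t'),x',m')\, d\kr(t',x',m'\,|\,z_{n},m_{n},a_{n}),
$$
and I want its $\liminf$ to dominate $\int u_{\circ}(z',m')\,d\kr(z',m'\,|\,z_{\circ},m_{\circ},a_{\circ})$, which is itself $\le \Kc u_{\circ}(z_{\circ},m_{\circ})$ by definition of $\Kc$ (taking the sup over $a$, in particular $a=a_{\circ}$).

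The core of the argument has two ingredients. The first is that the piecewise-linear interpolation operator $[\cdot]_{h}$ and the time-truncation $t'\mapsto\max(t_{n}+h_{0},t')$ are harmless in the limit: since $h_{1}\to 0$, for any fixed $(t',x')$ the interpolated value $[u_{n}]_{h}(\max(t_{n}+h_{0},t'),x',m')$ is a convex combination of values $u_{n}$ at points within $O(h_{1})$ of $(t',x')$ and $O(h_{0})$ in time, so
$$
\liminf_{n\to\infty,\ (h_{0},h_{1})\to(0,0)} [u_{n}]_{h}(\max(t_{n}+h_{0},t'),x',m') \ \ge\ u_{\circ}(t',x',m')
$$
by the very definition of the lower relaxed limit $u_{\circ}$. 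The second ingredient is the passage of this pointwise $\liminf$ inside the integral against $\kr(\cdot\,|\,z_{n},m_{n},a_{n})$. Here I would invoke the boundedness of $(u_{n})_{n}$ together with the continuity properties of $\kr$ in $(z_{o},m_{o},a_{o})$ that underlie assumption \reff{eq: hyp conti KcT et Kc} — indeed \reff{eq: hyp conti KcT et Kc} is precisely the hypothesis that $\Kc\vp$ is lower-semicontinuous for lower-semicontinuous bounded $\vp$, which encodes weak continuity of $(z,a)\mapsto\kr(\cdot\,|\,z,m,a)$ — and apply a Fatou-type lemma for weakly converging measures (reverse Fatou for the $\limsup$ statement). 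Combining the two gives
$$
\liminf_{n,(h_{0},h_{1})} \int [u_{n}]_{h}(\cdots)\,d\kr(\cdots|z_{n},m_{n},a_{n}) \ \ge\ \int u_{\circ}(z',m')\,d\kr(z',m'|z_{\circ},m_{\circ},a_{\circ}) \ \ge\ \Kc u_{\circ}(z_{\circ},m_{\circ}),
$$
which is the claim.

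I expect the main obstacle to be the interchange of $\liminf$ with the integral when the reference measure $\kr(\cdot\,|\,z_{n},m_{n},a_{n})$ itself varies with $n$: one cannot apply the ordinary Fatou lemma, and must instead use a version adapted to weak convergence of measures, which requires the lower-semicontinuity of the integrand (here supplied by the definition of $u_{\circ}$, provided one is slightly careful that $u_{\circ}$ is genuinely l.s.c.\ as the lower relaxed limit) and the weak convergence $\kr(\cdot\,|\,z_{n},m_{n},a_{n})\to\kr(\cdot\,|\,z_{\circ},m_{\circ},a_{\circ})$. The latter convergence is not stated verbatim in the excerpt but is exactly what \reff{eq: hyp conti KcT et Kc} is designed to give (and is verified case by case in Section \ref{sec: examples}); I would either cite \cite{BBD16} for it or derive it from the explicit form of $\kr$ in Remark \ref{rem: joint condi distri ZM} together with the continuity of $(\mu,\sigma,F,\varpi)$ and dominated convergence. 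A secondary, more routine point is checking that $\max(t_{n}+h_{0},t')\to t'$ uniformly enough and that the corner weights $\omega(x'|x)$ are continuous in $x$, so that $[u_{n}]_{h}$ does not destroy the relaxed-limit bound; this is where the up-wind/interpolation bookkeeping of \cite{camilli1995approximation, camilli2009finite} enters, but it is standard.
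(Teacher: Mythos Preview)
There is a genuine gap in your argument for the $\liminf$ inequality, and it comes precisely at the last displayed chain of inequalities. You correctly observe earlier that $\int u_{\circ}(z',m')\,d\kr(z',m'\,|\,z_{\circ},m_{\circ},a_{\circ})\le \Kc u_{\circ}(z_{\circ},m_{\circ})$, since $\Kc$ is a supremum over $a\in\Ab$. But then your final display asserts the \emph{opposite} direction, $\ge \Kc u_{\circ}(z_{\circ},m_{\circ})$, which does not follow. The problem is structural: extracting near-optimal $a_{n}\to a_{\circ}$ for $\Kc_{h_{0},h_{1}}u_{n}(z_{n},m_{n})$ tells you nothing about whether $a_{\circ}$ is near-optimal for $\Kc u_{\circ}(z_{\circ},m_{\circ})$, so from $\liminf\ge \int u_{\circ}\,d\kr(\cdot|z_{\circ},m_{\circ},a_{\circ})$ you cannot reach $\liminf\ge \sup_{a}\int u_{\circ}\,d\kr(\cdot|z_{\circ},m_{\circ},a)$. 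Your extraction device is the right one for the $\limsup$ inequality (there you bound the sup from above by a single evaluated control and pass to the limit); for the $\liminf$ inequality you must instead fix an \emph{arbitrary} $a\in\Ab$, use $\Kc_{h_{0},h_{1}}u_{n}(z_{n},m_{n})\ge \int [u_{n}]_{h}(\cdots)\,d\kr(\cdots|z_{n},m_{n},a)$, pass to the $\liminf$ with $a$ fixed, and only then take the supremum over $a$. You have the roles of the two inequalities swapped.

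The paper's proof avoids this difficulty altogether by never unpacking the supremum over $a$. It first observes that $\Kc_{h_{0},h_{1}}u_{n}(z_{n},m_{n})=\Kc u_{n,h}(z_{n},m_{n})$ for the interpolated function $u_{n,h}(z',m'):=[u_{n}]_{h_{1}}(\max(t_{n}+h_{0},t'),x',m')$, then introduces $\bar u_{n_{\circ},h_{\circ}}$, the lower-semicontinuous envelope of $\inf_{n\ge n_{\circ},\,h\le h_{\circ}}u_{n,h}$, and uses monotonicity of $\Kc$ together with assumption \reff{eq: hyp conti KcT et Kc} directly (l.s.c.\ of $\Kc\bar u_{n_{\circ},h_{\circ}}$) to get $\liminf \Kc_{h_{0},h_{1}}u_{n}(z_{n},m_{n})\ge \Kc\bar u_{n_{\circ},h_{\circ}}(z_{\circ},m_{\circ})$; the conclusion follows by monotone convergence as $\bar u_{n_{\circ},h_{\circ}}\uparrow u_{\circ}$. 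This has the advantage that no Fatou lemma for weakly converging measures and no explicit weak-continuity statement for $\kr(\cdot|z,m,a)$ are needed: everything is delegated to the abstract hypothesis \reff{eq: hyp conti KcT et Kc} on $\Kc$ itself.
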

 
\begin{proof}
We first rewrite
\begin{equation}\label{Kh}
	\Kc_{h_{0}, h_{1}} u_{n}(z_{n},m_{n}) = \sup_{a \in \mathbf{A}}\int u_{n,h}(z', m')dk(z', m' | z_{n}, m_{n},a)
\end{equation}
where $u_{n,h}(z', m') := [u_{n}]_{h_{1}}(\max(t_{n} + h_{0}, t'), x', m')$. Let  $\bar{u}_{n_{\circ}, h_{\circ}}$ be the lower-semicontinuous envelope of $\inf_{n \geq n_{\circ}, h \leq h_{\circ}}u_{n,h}$. From (\ref{Kh}), we get, for $n \geq n_{\circ}$ and $h \leq h_{\circ}$,
	\[
		\mathcal{K}u_{n,h}(z_{n},m_{n}) \geq \mathcal{K}\bar{u}_{n_{\circ}, h_{\circ}}(z_{n}, m_{n}),
	\]
and, by (\ref{eq: hyp conti KcT et Kc}), passing to the limit inf as $(n, h) \rightarrow (+\infty, 0)$ leads to
	\[
		\liminf_{(n, h) \rightarrow (+\infty, 0)} \mathcal{K}u_{n,h}(z_{n},m_{n}) \geq \mathcal{K}\bar{u}_{n_{\circ}, h_{\circ}}(z_{\circ}, m_{\circ}).
	\]
Moreover, $\bar{u}_{n_{\circ}, h_{\circ}}$ $\uparrow$ $u_{\circ}$ point-wise. The required result is then obtained by monotone convergence.
\end{proof}

 \begin{Lemma}\label{lem: approx stabi}  Let   $(u_{n})_{n\ge 1}$ be a sequence of  lower semi-continuous maps on  $\Zb\x \Mb$ and define 
 $u_{\circ}:=\liminf_{(z',m',n)\to (\cdot,\infty)} u_{n}(z',m')$ on $ \Zb\x \Mb$. Assume that $u_{\circ}$ is locally bounded.  Let $\vp$ be a  continuous map and assume that $(z_{\circ},m_{\circ})$ is a strict minimal point of $u_{\circ}-\vp$ on $\Zb\x \Mb$. Then, one can find   a bounded open set  $B$  of $[0,T]\x \R^{d}$ and  a sequence $(z_{k},m_{k},n_{k})_{n\ge 1}\subset B\x\Mb\x\N$ such that $n_{k}\to \infty$, $(z_{k},m_{k})$ is a   minimum point of  $u_{n_{k}}-\vp$ on $ B \x \Mb$ and   $(z_{k},m_{k},u_{n_{k}}(z_{k},m_{k}))\to (z_{o},m_{\circ},u_{\circ}(z_{o},m_{o}))$.
 \end{Lemma}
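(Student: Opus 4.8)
The statement is a standard "perturbed minimum" lemma of the kind used to verify the stability condition in the Barles–Souganidis framework, adapted to the presence of the extra measure variable $m \in \Mb$. The plan is to exploit that $\Mb$ is locally compact and that $u_\circ$ is obtained as a relaxed lower limit, so that a diagonal extraction recovers a genuine minimizing sequence living in a fixed compact set, and then to upgrade convergence of the arguments to convergence of the values using lower semicontinuity of each $u_n$ together with the definition of $u_\circ$.

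**Step 1: localize.** Since $(z_\circ, m_\circ)$ is a \emph{strict} minimum of the locally bounded function $u_\circ - \vp$, I would first fix a closed ball $\bar B_0 \times \bar D_0 \subset [0,T] \times \R^d \times \Mb$ around $(z_\circ, m_\circ)$, with $\bar D_0$ compact (possible because $\Mb$ is locally compact), on which $u_\circ - \vp$ attains its minimum over that compact neighborhood only at $(z_\circ, m_\circ)$, and such that on the boundary the value is strictly larger, say $(u_\circ - \vp)(z_\circ, m_\circ) + 2\delta \le (u_\circ - \vp)$ on the boundary sphere for some $\delta > 0$. Take $B$ to be the interior of $\bar B_0$ (a bounded open set of $[0,T]\times \R^d$) and work on $B \times \bar D_0 \times \N$ throughout.

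**Step 2: extract the minimizing sequence.** Using the definition $u_\circ(z',m') = \liminf_{(\tilde z, \tilde m, n) \to (z', m', \infty)} u_n(\tilde z, \tilde m)$, pick for each $k$ a point $(\zeta_k, \eta_k)$ near $(z_\circ, m_\circ)$ and an index $n_k \ge k$ with $u_{n_k}(\zeta_k, \eta_k) \le u_\circ(z_\circ, m_\circ) + 1/k$; this guarantees $\inf_{\bar B_0 \times \bar D_0}(u_{n_k} - \vp) \le (u_\circ - \vp)(z_\circ, m_\circ) + o(1)$. Since each $u_{n_k}$ is lower semicontinuous on the compact set $\bar B_0 \times \bar D_0$, it attains a minimum there at some $(z_k, m_k)$; this minimum is over $B \times \bar D_0$ as well once we check $(z_k, m_k)$ stays off the boundary. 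By compactness of $\bar B_0 \times \bar D_0$, pass to a subsequence so that $(z_k, m_k) \to (z_*, m_*) \in \bar B_0 \times \bar D_0$.

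**Step 3: identify the limit and conclude.** From $\liminf_k u_{n_k}(z_k, m_k) \ge u_\circ(z_*, m_*)$ (definition of $u_\circ$) and continuity of $\vp$, together with the upper bound from Step 2, I get $(u_\circ - \vp)(z_*, m_*) \le \liminf_k (u_{n_k} - \vp)(z_k, m_k) \le (u_\circ - \vp)(z_\circ, m_\circ)$. By the strict-minimum property this forces $(z_*, m_*) = (z_\circ, m_\circ)$; in particular $(z_k, m_k)$ is eventually in the interior $B \times \bar D_0$, so it is a genuine minimum point of $u_{n_k} - \vp$ on $B \times \Mb$ (the comparison on the compact neighborhood combined with the boundary gap $2\delta$ ensures nothing outside $\bar D_0$ or outside $B$ does better, for $k$ large). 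Finally, sandwiching $u_\circ(z_\circ, m_\circ) \le \liminf_k u_{n_k}(z_k, m_k) \le \limsup_k u_{n_k}(z_k, m_k) \le \limsup_k \big[(u_{n_k}-\vp)(z_k,m_k)\big] + \vp(z_\circ,m_\circ) \le (u_\circ - \vp)(z_\circ, m_\circ) + \vp(z_\circ, m_\circ) = u_\circ(z_\circ, m_\circ)$ yields $u_{n_k}(z_k, m_k) \to u_\circ(z_\circ, m_\circ)$, completing the proof after relabeling.

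**Main obstacle.** The only delicate point is the interplay between the $m$-variable and compactness: one must invoke local compactness of $\Mb$ to get a compact neighborhood $\bar D_0$, and then argue that the minimizer genuinely lies in the \emph{interior} region $B \times \Mb$ rather than escaping toward $\partial B$ or toward $\partial \bar D_0$ — this is exactly where the \emph{strictness} of the minimum and the boundary gap $\delta$ are needed. Everything else (lower semicontinuity giving existence of minima on compacts, the diagonal extraction, and the value convergence by squeezing) is routine.
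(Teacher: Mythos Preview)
Your proposal is correct and follows exactly the route the paper indicates: the paper's own proof is a one-line reference to the standard ``perturbed minimum'' argument (Lemma 6.1 in Barles' lecture notes), invoking only the local compactness of $\Mb$ to carry it over to the present setting, and you have simply written that argument out in full. The only point worth flagging is your parenthetical claim that the boundary gap $2\delta$ forces the minimum over $\bar B_0\times\bar D_0$ to be a minimum over all of $B\times\Mb$; strictly speaking this would require a uniform lower bound on $u_{n_k}-\vp$ outside $\bar D_0$, which you have not established, but the Barles--Souganidis application only needs local minima anyway, so this is a harmless imprecision shared with the lemma's own statement.
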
 

\proof Since $\Mb$ is assumed to be locally compact, it suffices  to repeat the arguments in the proof of \cite[p80, Proof of Lemma 6.1]{barles1994solutions}.
\ep

\subsection{Construction of $\eps$-optimal controls}\label{subsec: eps optimal controls}
 
 It remains to explain how to deduce the optimal policy.   At each of point $(t,x)$ of the time-space grid and for each prior $m$, one computes  
$$
(\hat \ell(t,x,m),\hat b(t,x,m))\in {\rm arg} \max \left\{\int \vr^{c}_{h}(z', m')d\kr(z', m' | (t,x), m,(\ell,b)), \;(\ell,b)\in A\right\}.
$$
If  $\vr^{c}_{h}(t,x,m)$ is equal to the above maximum, then we play the control $(\hat \ell(t,x,m),\hat b(t,x,m))$, otherwise we wait for the next time step. This is the usual philosophy: we act on the system only if this increases the expected gain. As already argued, here the gain should not only be considered as an improvement of the current future reward, it can also be a gain in the precision of our prior which will then lead to better {future} rewards. 
\vs2

This produces a Markovian control which is optimal for the discrete time problem associated to our numerical scheme, and  asymptotically optimal for the original control problem. 
We shall use this algorithm for the toy examples presented in the next section. 

\section{Applications to optimal trading}\label{sec: examples}

This section is devoted to the study of two examples of application. Each of them corresponds to an idealized model, the aim here is not to come up with a good model but rather to show the flexibility of our approach, and to illustrate numerically the behavior of our backward algorithm.

    \subsection{Immediate impact of aggressive orders}
\def\Bb{\mathbf B}

We consider first a model in which the impact of each single order sent to the market is taken into account. It means that $\alpha_{i}$ represents the number of shares bought exactly at time $\tau_{i}$, so that $\ell_{i} = 0$, for each $i$. This corresponds to $\Ab = \{0\} \times \Bb$ in which $\Bb\subset \R_{+}$ is a compact  set of values of admissible orders. Therefore,  one can identify $\Ab$ to $\Bb$ in the following, and we will only write $b$ for $a=(0,b)\in \Ab$ and $\beta_{i}$ for $\alpha_{i}=(\ell_{i},\beta_{i})$.  

Our model  can be viewed as a scheduling model or as a model for illiquid market.
The first component of $X$ represents the stock price. We consider a simple linear impact: when a trade of size $\beta_{i}$ occurs at $\tau_{i}$, the stock price jumps by
    \[
        X_{\vartheta_{i}}^{1} = X_{\tau_{i}-}^{1} + \beta_{i}(\upsilon + \epsilon_{i})/2
    \]
in which $\upsilon \in \mathbb{R}$ is the unknown linear impact parameter, $(\epsilon_{i})_{i\ge 1}$ is a sequence of independent noises following a centered Gaussian distribution with standard deviation $\sigma_{\epsilon}$. The coefficient 1/2 in the dynamics of $X^{1}$ stands for a 50\% proportion of immediate resilience.

It evolves according to a Brownian diffusion between two trades and has a residual resilience effect:
\begin{equation}
dX_{t}^{1} = \sigma dW_{t}^{1} + dX_{t}^{4} \text{ and } dX_{t}^{4} = -\rho X_{t}^{4}dt,
\end{equation}
where $\sigma, \rho > 0$ and $X^{1}_{0} \in \mathbb{R}$ are constants. The process $X^{4}$ represents the drift of $X^{1}$ due to the non immediate resilience and $X_{0}^{4} = 0$. When a trade occurs, it jumps according to

    \[
        X_{\vartheta_{i}}^{4} = X_{\tau_{i}-}^{4} + \beta_{i}(\upsilon + \epsilon_{i})/2.
    \]
    We call it spread hereafter. This is part of the deviation from the un-impacted dynamic. 

The third component, which describes the total cost, evolves as

    \[
        X_{\vartheta_{i}}^{2} = X_{\tau_{i}-}^{2} + X_{\tau_{i}-}^{1}\beta_{i} + (\upsilon + \epsilon_{i})\frac{\beta_{i}^{2}}{2}.
    \]

Finally, the last component is used to keep track of the cumulative number of shares bought:

    \[
        X_{\vartheta_{i}}^{3} = X_{\tau_{i}-}^{3} + \beta_{i}.
    \]

We are  interest in the cost of buying $N$ shares, and maximize the criteria  
$$
-\E_{m}[e^{\eta L(X_{T}, \upsilon)}\wedge  C ]
$$
where $\eta>0$ is a risk aversion parameter, $C>0$,  and 
    \[
        L(X_{T}, \upsilon) := X_{T}^{2} + X_{T}^{1}(N - X_{T}^{3}) + (\upsilon + \epsilon_{0})\frac{(N - X_{T}^{3})^{2}}{2}
    \]
represents the total cost after setting the total number of shares bought to $N$ at $T$.

If the prior law $m$ on $\upsilon$ is a Gaussian distribution, then $\qr(\cdot| t,x, b, u)$ is a Gaussian density with respect to
    \[
        d\Qr(x'|t,x, b) = dx^{1'}d\delta_{x^2 + b x^{1'}}(x^{2'})d\delta_{x^3+b}(x^{3'})d\delta_{x^4 + (x^{1'} - x^1)}(x^{4'})
    \]
and the transition map
    \[
        \mathfrak{M}(m;t',x', t, x,b)[C] = \frac{\int_{C}\qr(x'| t, x, b, u)dm(u)}{\int_{\mathbb{R}}\qr(x'| t, x,b, u)dm(u)},
    \]
maps Gaussian distributions into Gaussian distributions, which, in practice, enables us to restrict $\Mb$ to the set of Gaussian distributions.
More precisely, if $(m_{\upsilon}(\tau_{i}-), \sigma_{\upsilon}(\tau_{i}-))$ are the mean and the standard deviation of $M_{\tau_{i}-}$, then the values corresponding to the posterior distribution $M_{\vartheta_{i}}$ are  
    \[
        \begin{aligned}
            \sigma_{\upsilon} (\vartheta_{i}) &= \mathbf{1}_{\{\sigma_{\upsilon}(\tau_{i}-)\not=0\}}\left(\frac{1}{\sigma_{\upsilon}(\tau_{i}-)^{2}} + \frac{1}{\sigma_{\epsilon}^{2}}\right)^{-\frac12}, \\
            m_{\upsilon}(\vartheta_{i}) &= m_{\upsilon}(\tau_{i}-)\mathbf{1}_{\{\sigma_{\upsilon}(\tau_{i}-)=0\}} + \left( \frac{X_{\vartheta_{i}}^{1} - X_{\tau_{i}-}^{1}}{\sigma_{\epsilon}^{2}} + \frac{m_{\upsilon}(\tau_{i}-)}{\sigma_{\upsilon}(\tau_{i}-)^{2}}\right)\mathbf{1}_{\{\sigma_{\upsilon}(\tau_{i}-)\not=0\}}.
        \end{aligned}
    \]
\def\wr{{\rm w}}

Comparing to the general result of the previous section, we  add a boundary condition $\vr(t,x^{1},x^{2}, N,x^{4})=1$ and restrict the domain of $X^{3}$ to be $\{0,\ldots,N\}$. Since this parameter $x^{3}$ is discrete this does not change the nature of our general results. 

Note also that the map $\Psi(t,x,m)=N-x^{3}$ defined on $[0,T]\x\R^{2}\x\{0,\ldots,N\}\x \R\x \Mb$ actually satisfies the conditions provided in \cite{BBD16} to ensure that Assumption \ref{ass: comp} holds.

\vs2

We now discuss a numerical illustration.  We consider 30 seconds of trading and $N = 25$ shares to buy. We take $\eta=1$, $x_{0} = 100$ and $\sigma = 0.4 x_{0}$ which corresponds to a volatility of $40\%$ in annual terms. The trading period is divided into intervals of 1 second-length. The size of an order $\beta_{i}$ ranges in $\{ 1, 2, 3, 4, 5\}$. We take $\sigma_{\varepsilon} = 10^{-4}$ and $\rho$ such that the spread $X^{4}$ is divided by 3 every second if no new order is sent. We start with a prior given by a Gaussian distribution with mean $m_{\upsilon}(0)$ and standard deviation $\sigma_{v}(0)$. Finally, we take $C=10^{200}$ which makes this threshold parameter essentially inefficient while still ensuring that the terminal condition is bounded.

In Figure \ref{resilience3D}, we plot the optimal strategy for $\sigma_{\upsilon}(0) = 5.10^{-4}$ and $m_{v}(0) = 5.10^{-2}$ in terms of $(X^{2}, X^{3})$. Clearly, the level of spread $X^{4}$ has a significant impact: when it is large, it is better to wait for it to decrease before sending a new order. This can also be observed in Figure \ref{resilience2D} which provides a simulated path corresponding to an initial prior $(m_{v}(0) = 2.10^{-2}, \sigma_{\upsilon}(0) = 10^{-3})$: after 15 seconds the algorithm alternates between sending an order and doing nothing, i.e. waiting for the spread to be reduced at the next time step. On the top right graph, we can also observe that the low mean of the initial prior combined with a zero initial resilience leads to sending an order of size 3 at first, then the mean of the prior is quickly adjusted to a higher level and the algorithm slows down immediately.

  \begin{figure}
 \begin{center}
 \includegraphics[scale=0.31]{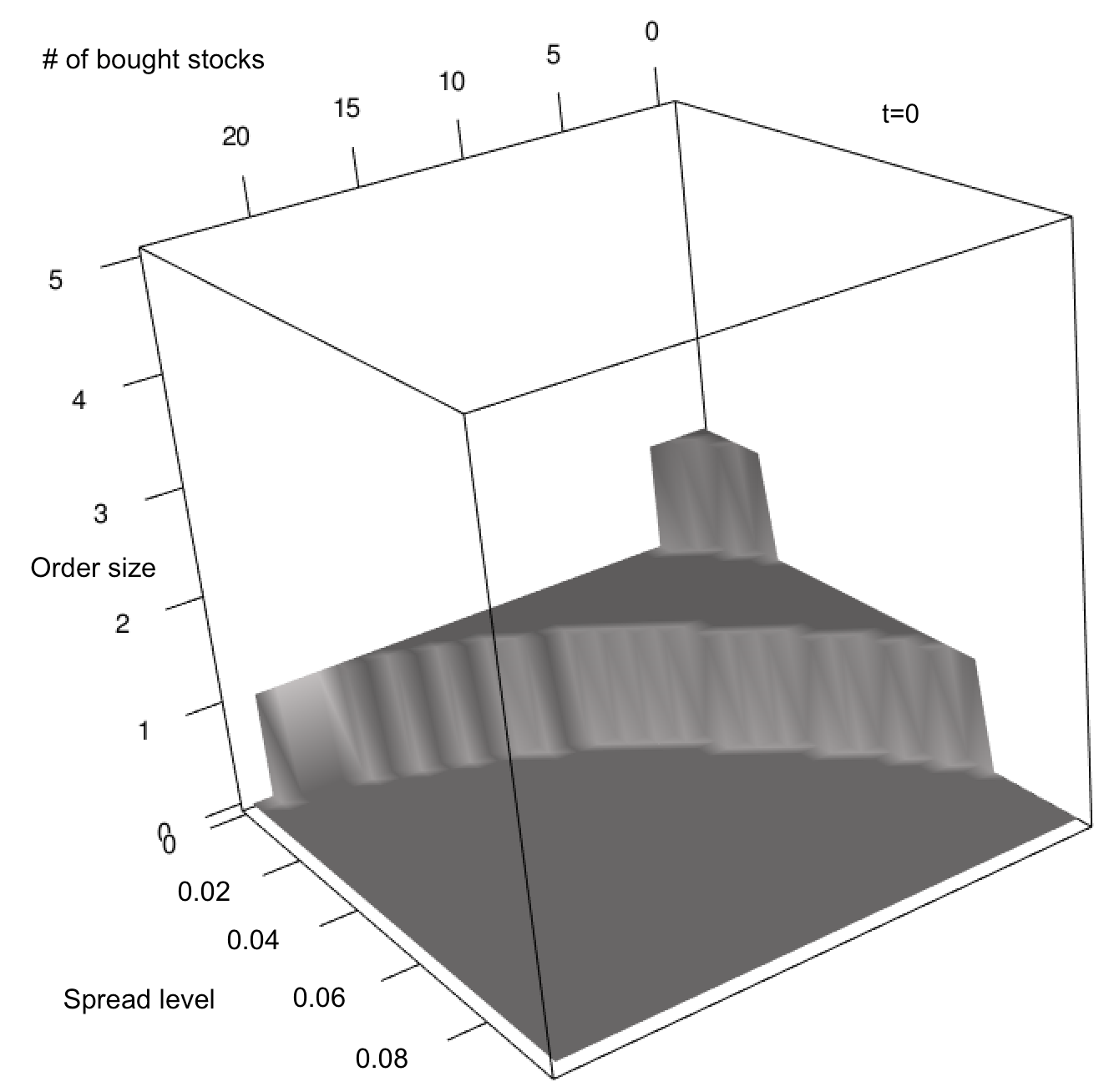}
 \end{center}
 \includegraphics[scale=0.31]{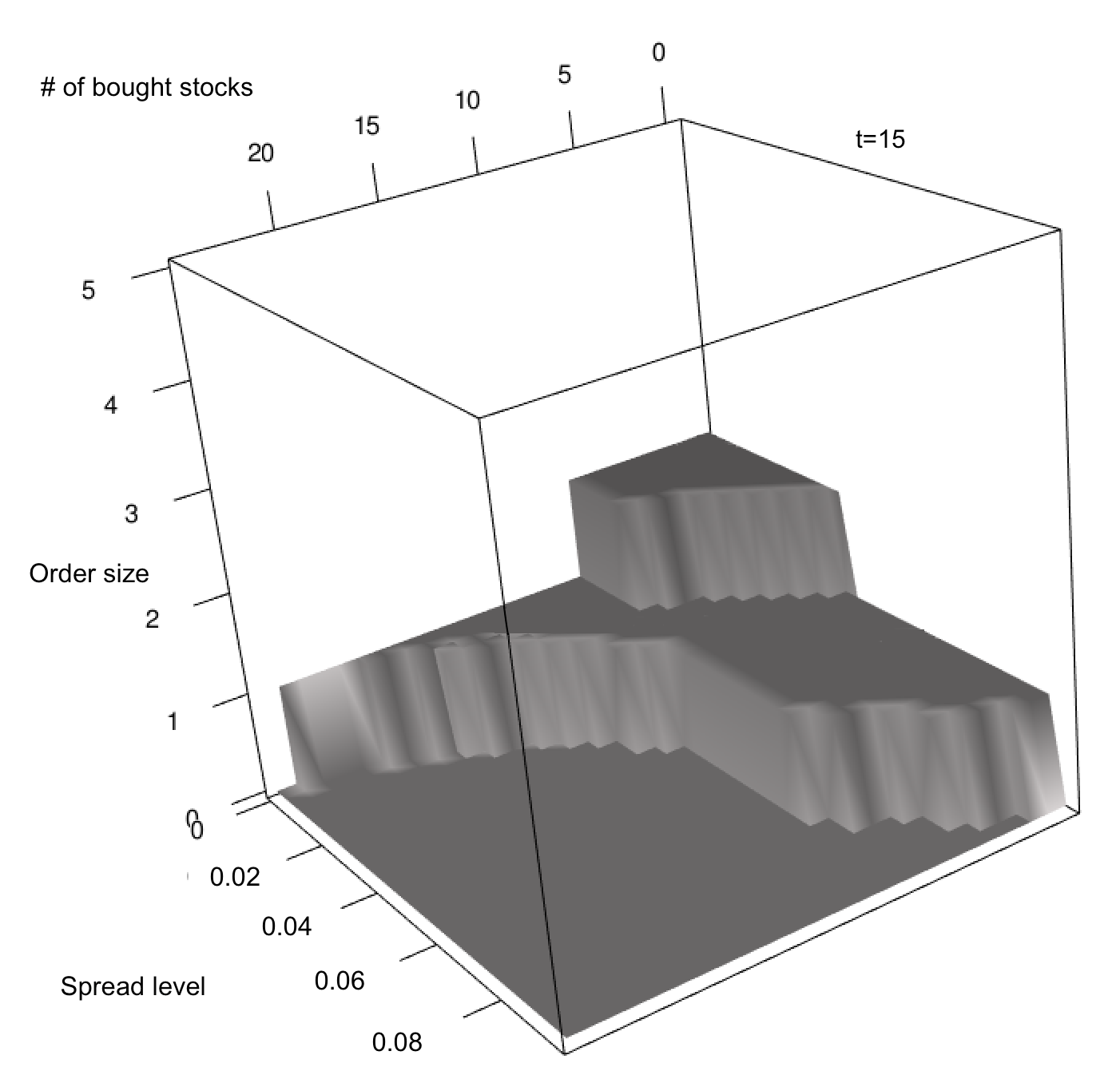} \includegraphics[scale=0.31]{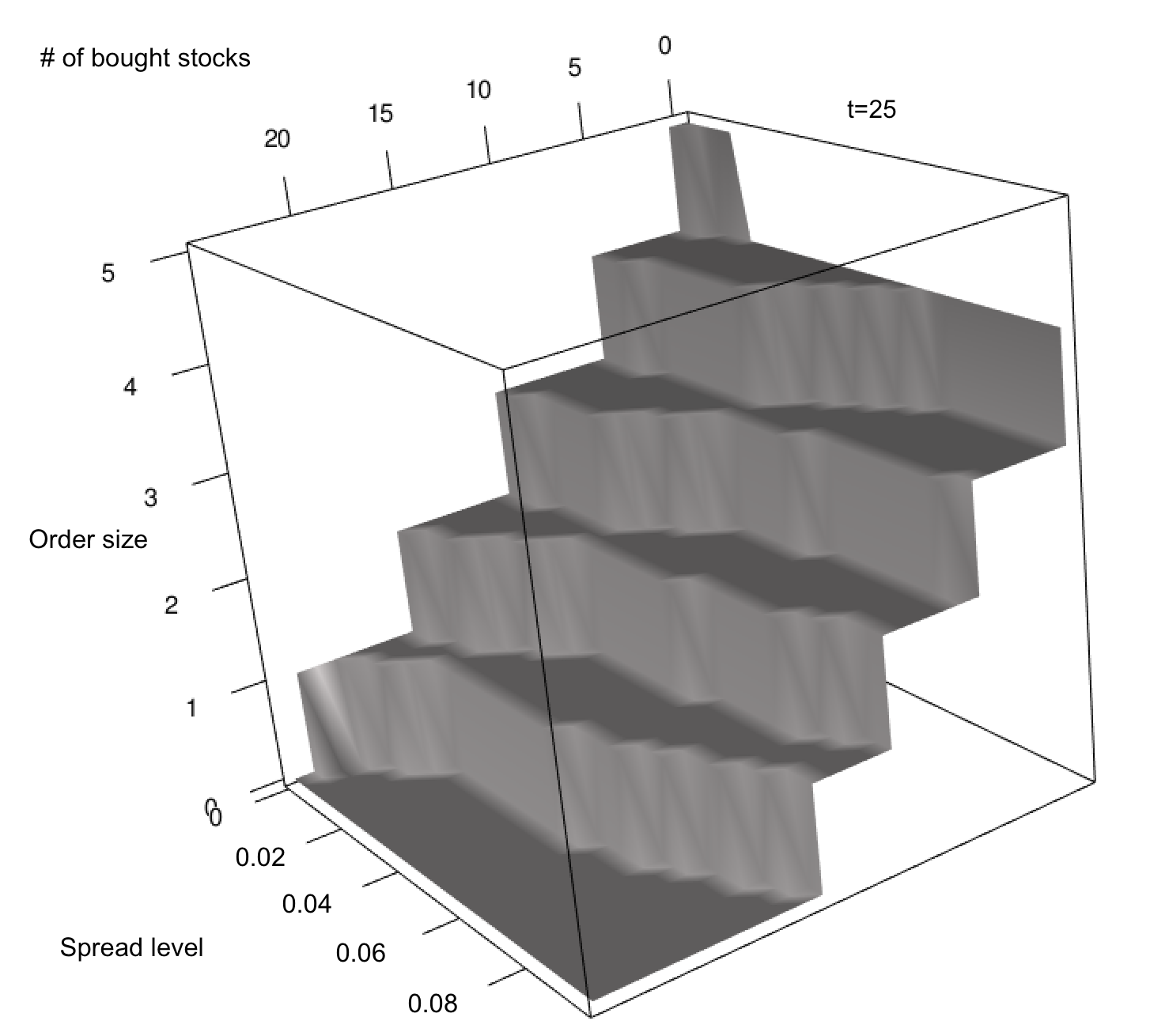}
 \captionof{figure}{Evolution of $\beta$ in terms of $(X^{3},X^{4})$ at time $0$s (top), $15$s (left) and $25$s (right), for $(m_{\upsilon},\sigma_{\upsilon})=(5.10^{-2},5.10^{{-4}})$.}
\label{resilience3D}
\end{figure}

 \begin{figure}
 \begin{center}
 \includegraphics[scale=0.29]{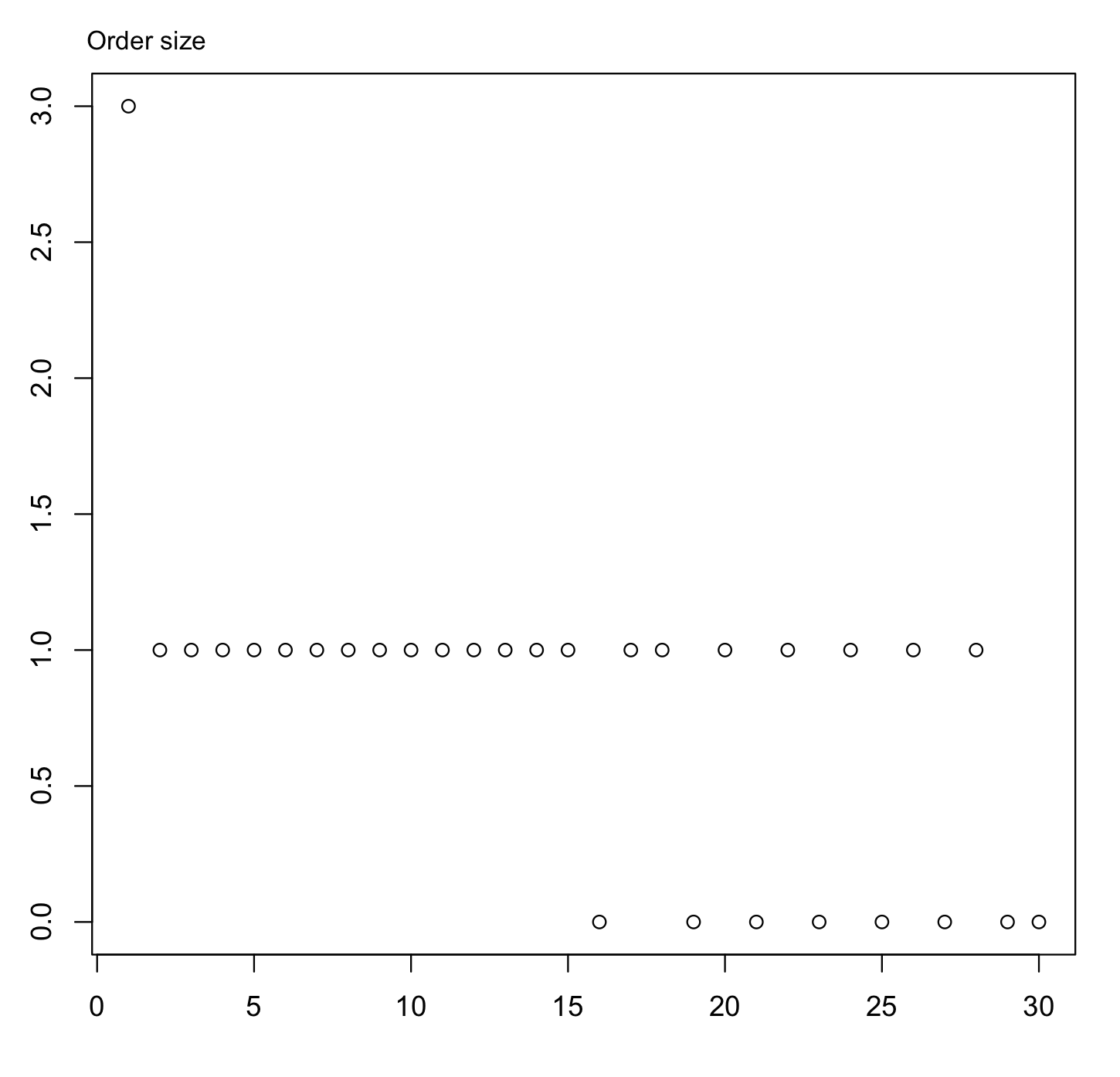} \includegraphics[scale=0.29]{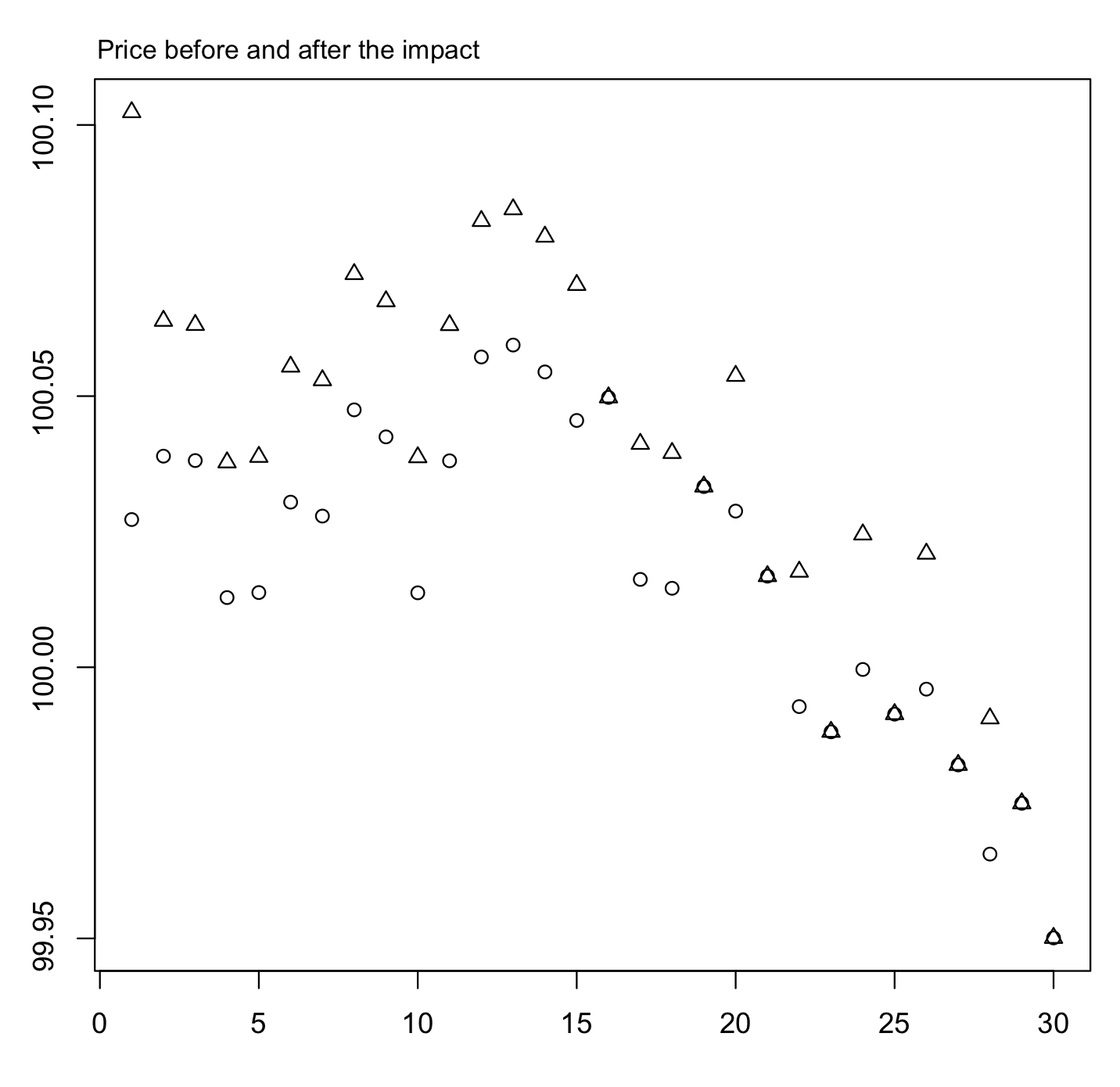}
 \end{center}
 \begin{center}
\includegraphics[scale=0.29]{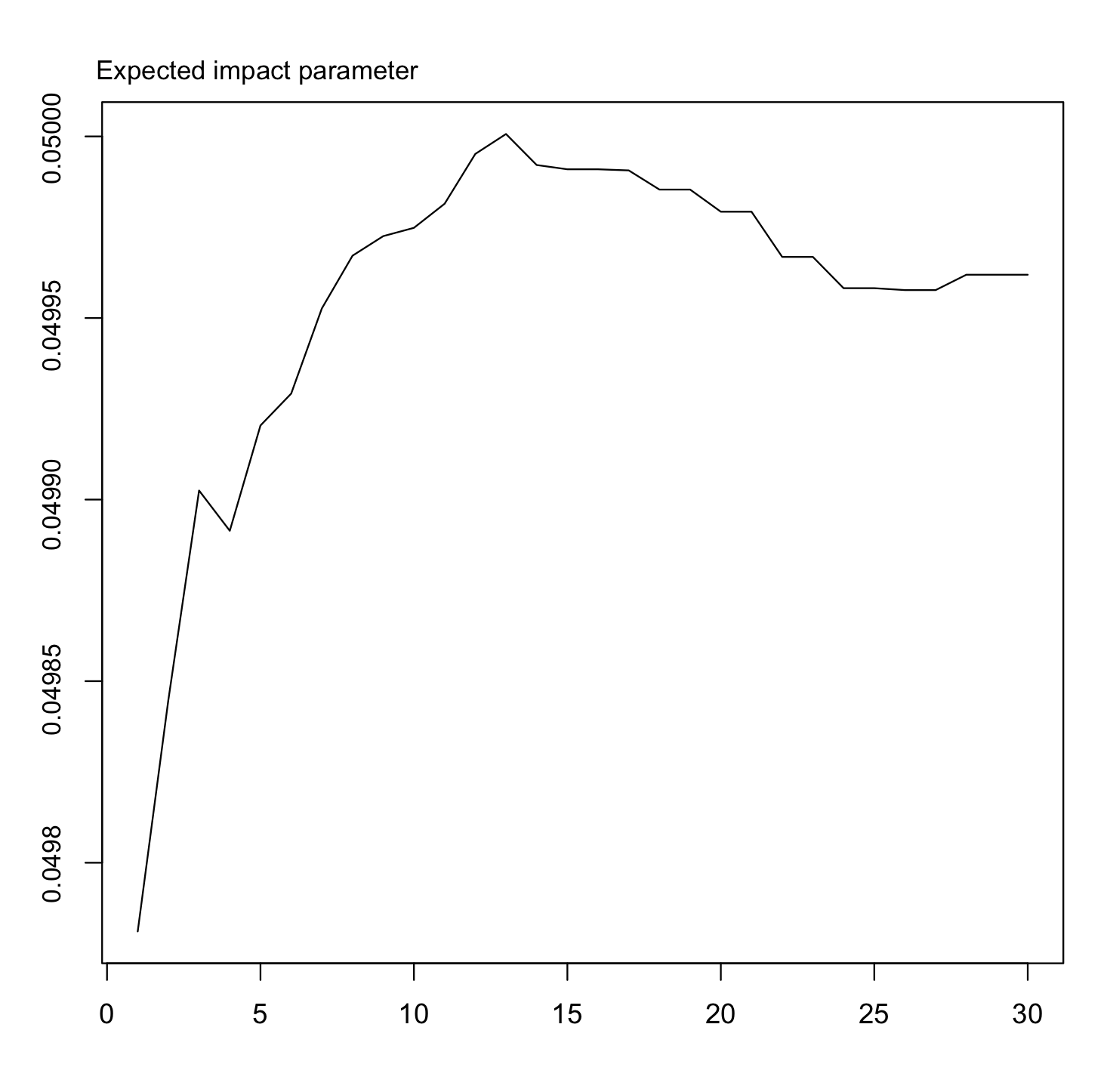} \includegraphics[scale=0.29]{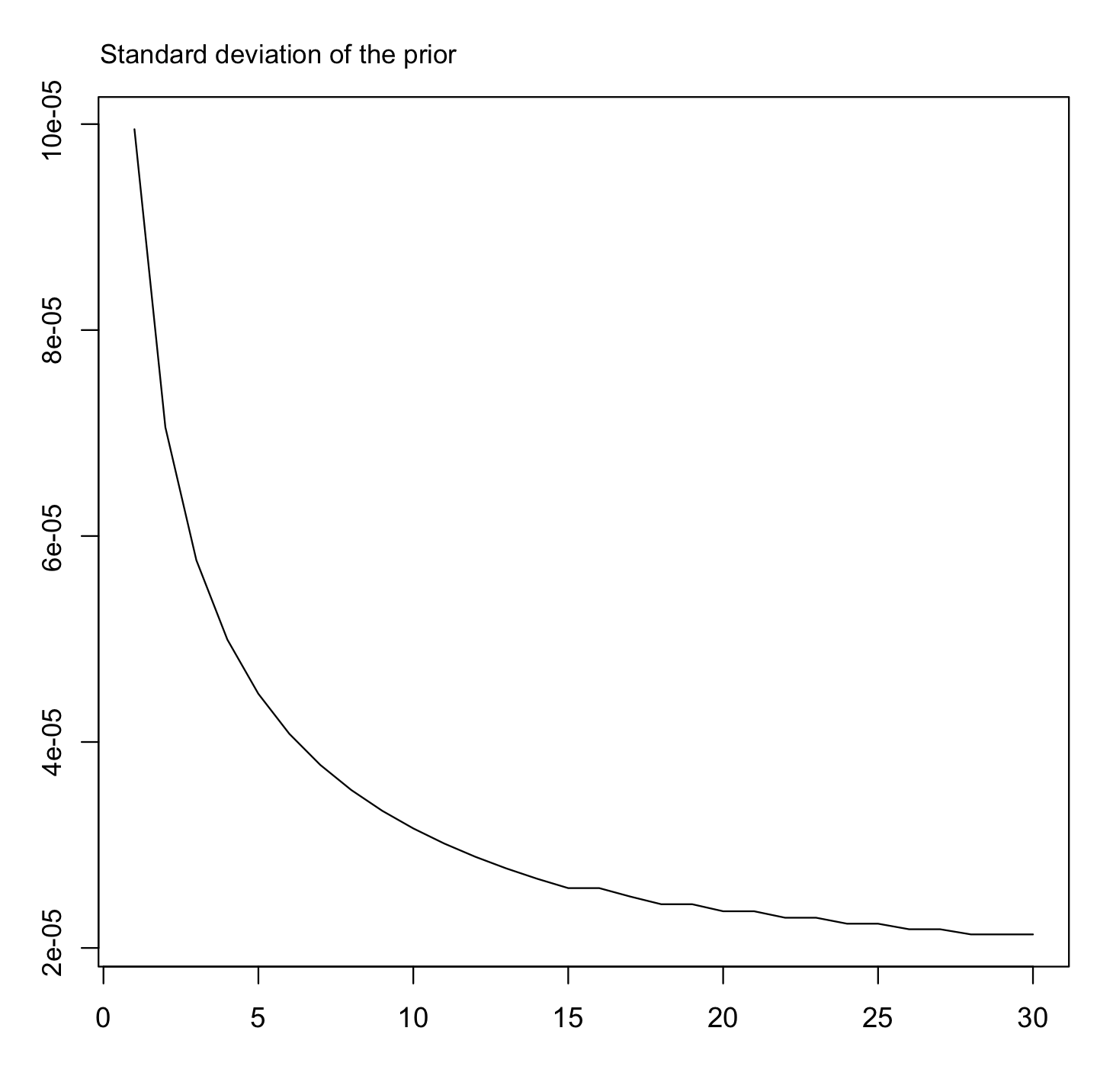}
 \end{center}
 \captionof{figure}{Evolution of $\beta$ (top left), price before (circles) and after (triangles) the impact (top right), $m_{\upsilon}$ (bottom left), $\sigma_{\upsilon}$ (bottom right), with time in second. The true value of $\upsilon$ is $5.10^{-2}$. $x$-axis: time in seconds.}
\label{resilience2D}
\end{figure}

\clearpage 
 
 Let us now  consider the case  $\rho=0$, i.e.~without dynamic resilience, with a trading period of $60$ seconds and  $N=50$.
In Figure \ref{fig: gA aggressif}, we provide the optimal policy (number of traded shares) in terms of the number $X^{3}$ of already {traded shares} and the prior's mean parameter $m_{\upsilon}$ for different times.  Not surprisingly the algorithm is more aggressive as the prior's mean {decreases} and the remaining number of {shares} to buy {increases}. It is rather stable in time (compare $t=0$s with $t=30$s) up to the end where it is forced to accelerate to avoid a large final impact cost. It is also much more aggressive compared to the case $\rho>0$ presented above: we can no more make profit of the decrease of the resilience term $X^{4}$, and there is no reason to wait. 
  \begin{figure}
 \begin{center}
 \includegraphics[scale=0.4]{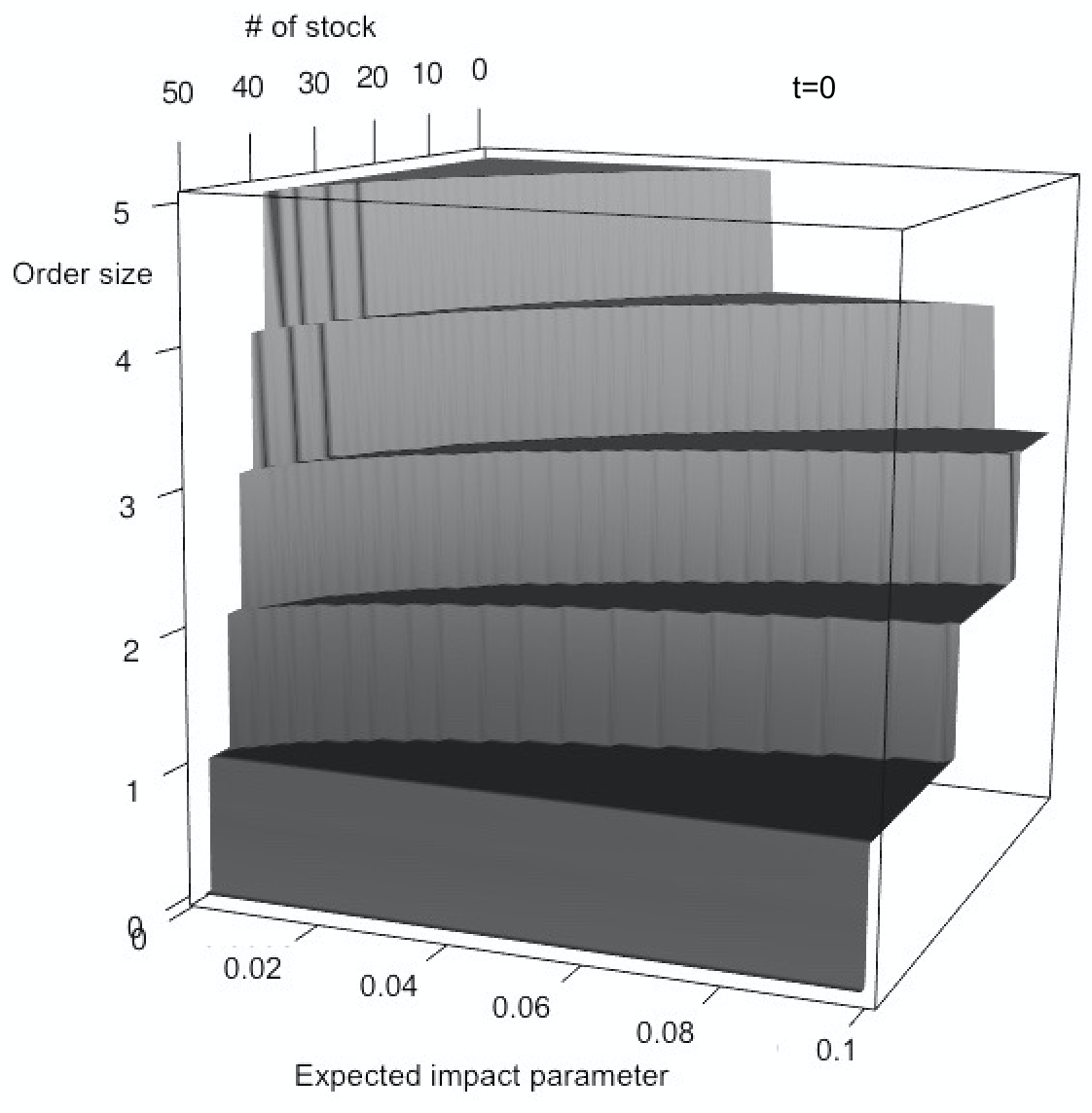}
 \end{center}
 \includegraphics[scale=0.4]{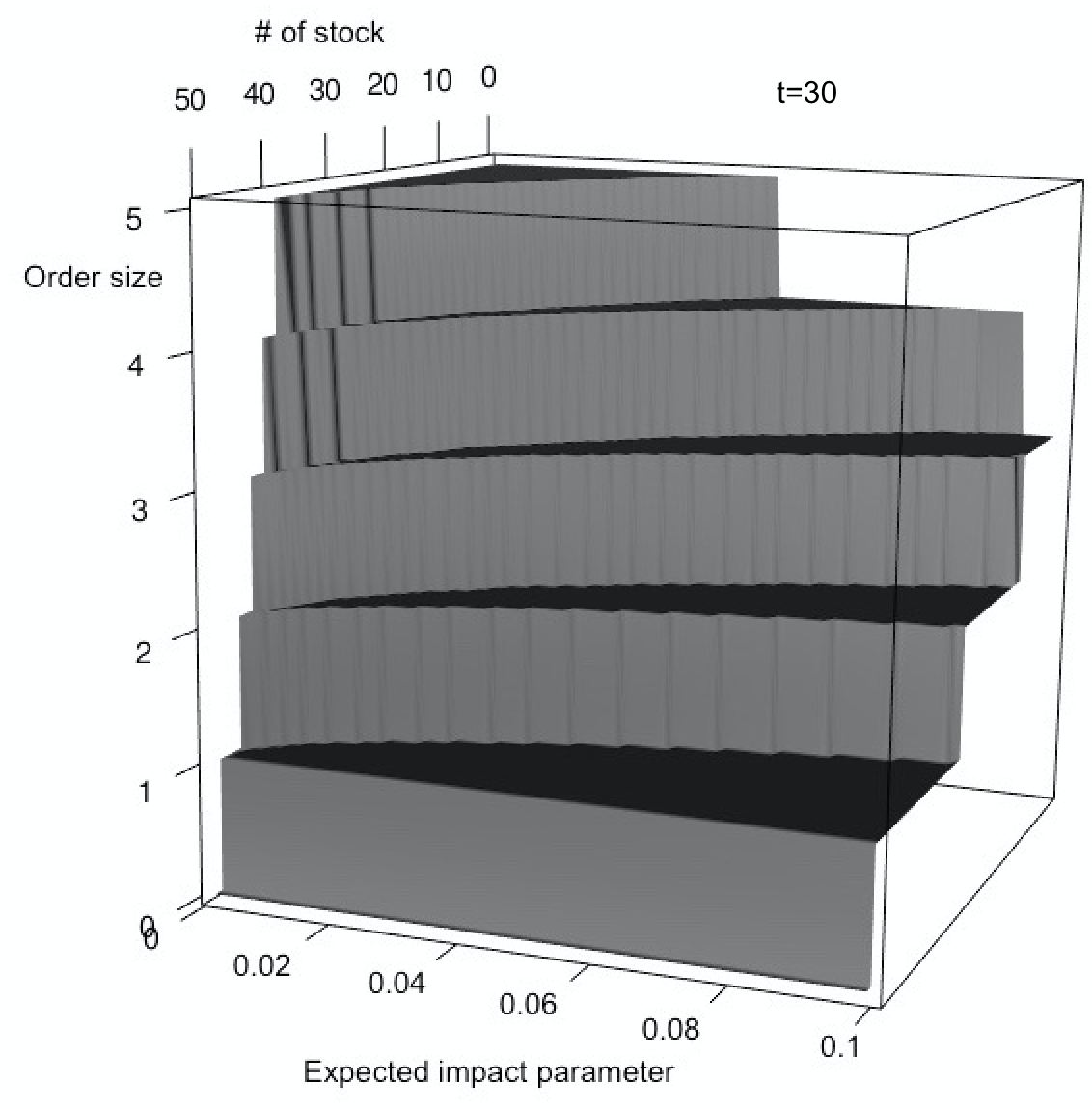} \includegraphics[scale=0.4]{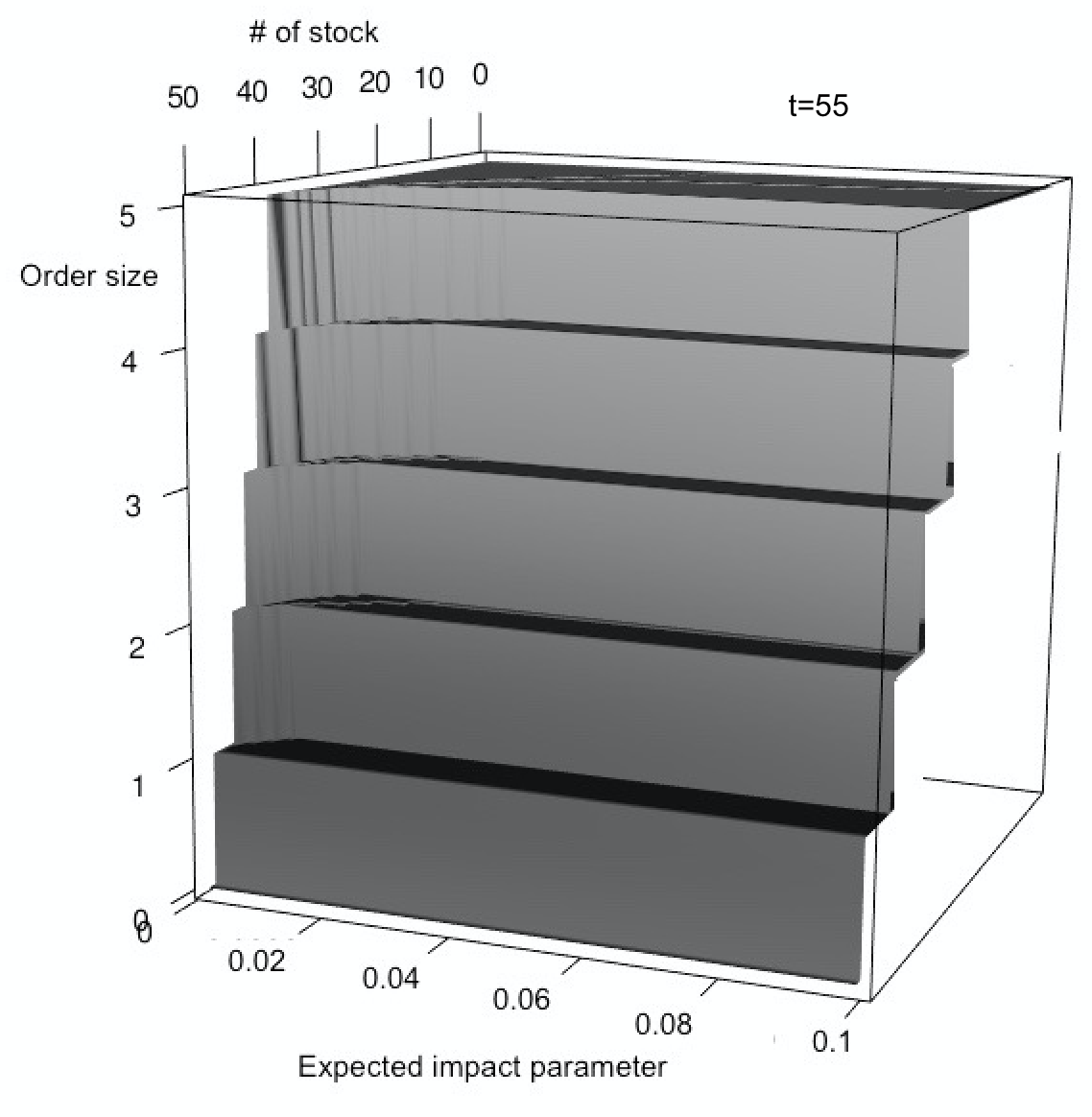}
 \captionof{figure}{Evolution of $\alpha$ in terms of $(m_{\upsilon},X^{3})$ at time $0$s (top), $30$s (left) and $55$s (right), for $\sigma_{\upsilon}=5.10^{{-4}}$.}
\label{fig: gA aggressif}
\end{figure}

In Figure \ref{fig: spath aggressif}, we provide a simulated path of $(X,\alpha,m_{\upsilon},\sigma_{\upsilon})$ that shows how the prior on the unknown coefficients $\upsilon$ can adapt to changing market conditions.  The red dashed lines and circles correspond to the same path of Brownian motion and the same realized noises $(\epsilon_{i})_{i\ge 1}$ as the black solid lines and crosses, but the true parameter is changed from $5.10^{-2}$ to $5.10^{-4}$ after $5$ seconds.  It is more aggressive quite quickly after the shock as the prior adapts to the new small level of impact. Note that the total number of {shares} is bought slightly before $30$s, so that the prior do not change anymore after this date. 
 \begin{figure}
 \begin{center}
\hspace{-1cm} \includegraphics[scale=0.3]{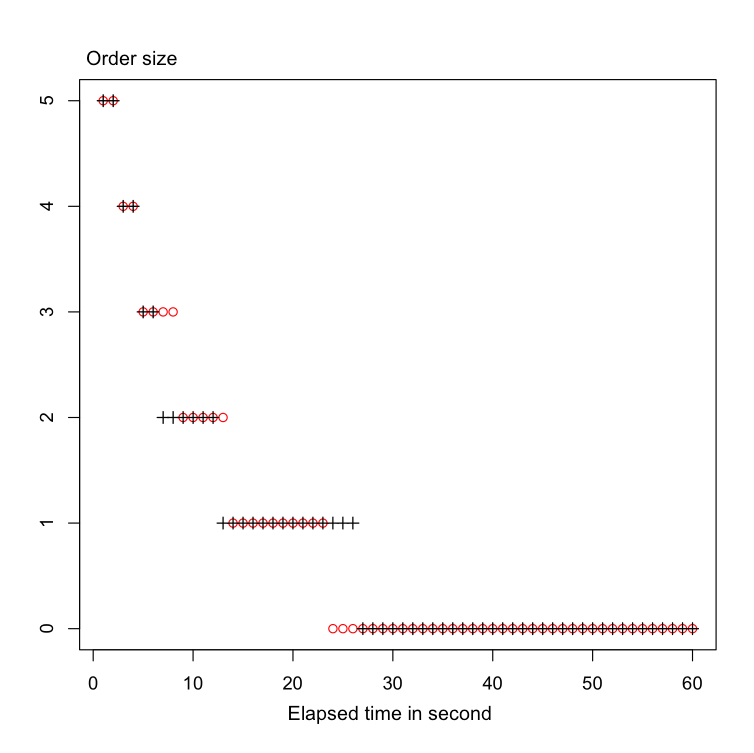}
 \end{center}
\hspace{-1cm}\includegraphics[scale=0.3]{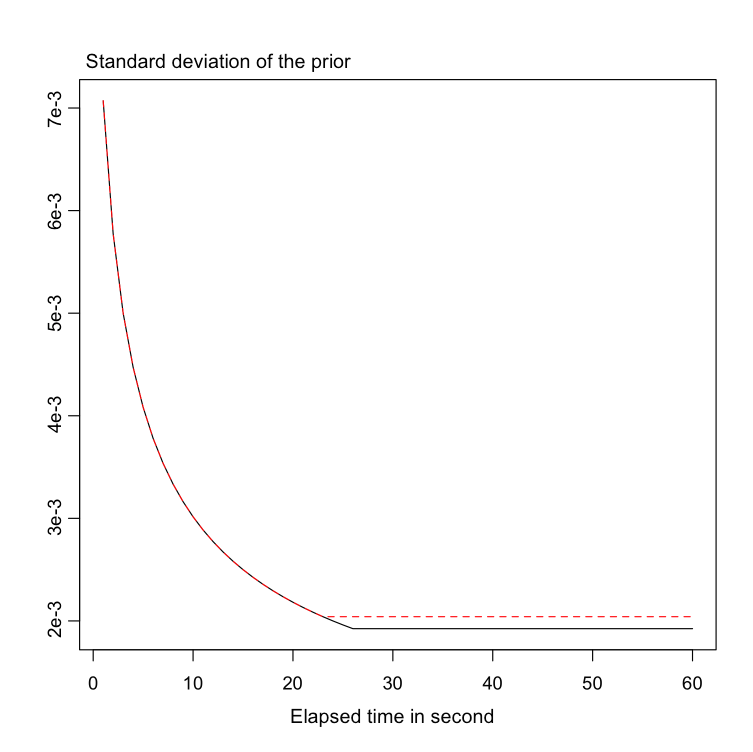}\includegraphics[scale=0.3]{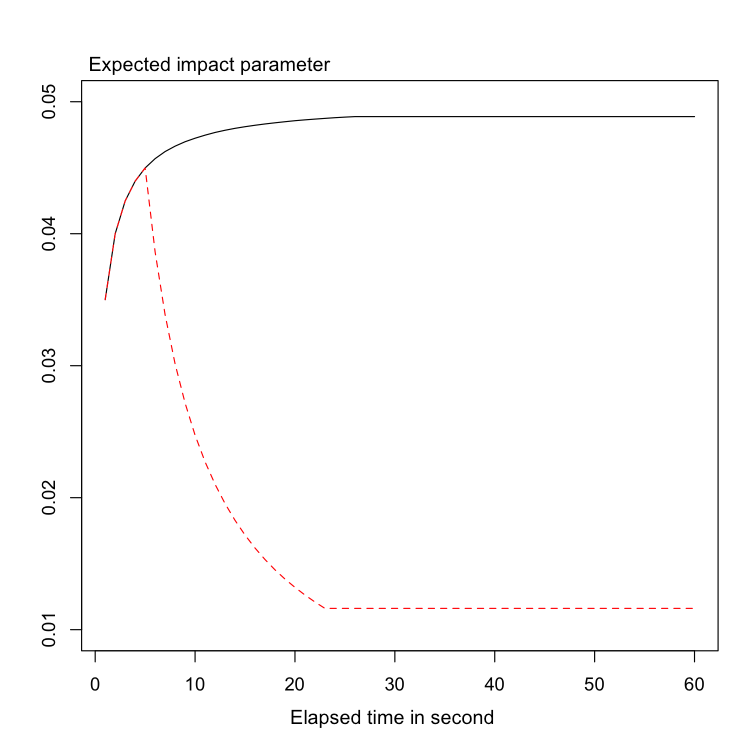}
 \captionof{figure}{Evolution of $\alpha$ (top), $\sigma_{\upsilon}$ (left), $m_{\upsilon}$ (right) with time. Black crosses and black solid lines: the true value of $\upsilon$ is $5.10^{-2}$. Red circles and red dashed lines: the true value of $\upsilon$ is $5.10^{-2}$ for the first $5$ seconds, and then jumps to $5.10^{-4}$.}
\label{fig: spath aggressif}
\end{figure}

In Figure \ref{fig: ln(v)}, we plot the log of the value function minus the {cost} $5.10^{3}$ of buying the {total shares} without impact{ (similar to the {\sl implementation shortfall})}, in terms of the different quantities of interest. 
\begin{figure}
\begin{center}
  \includegraphics[scale=0.35]{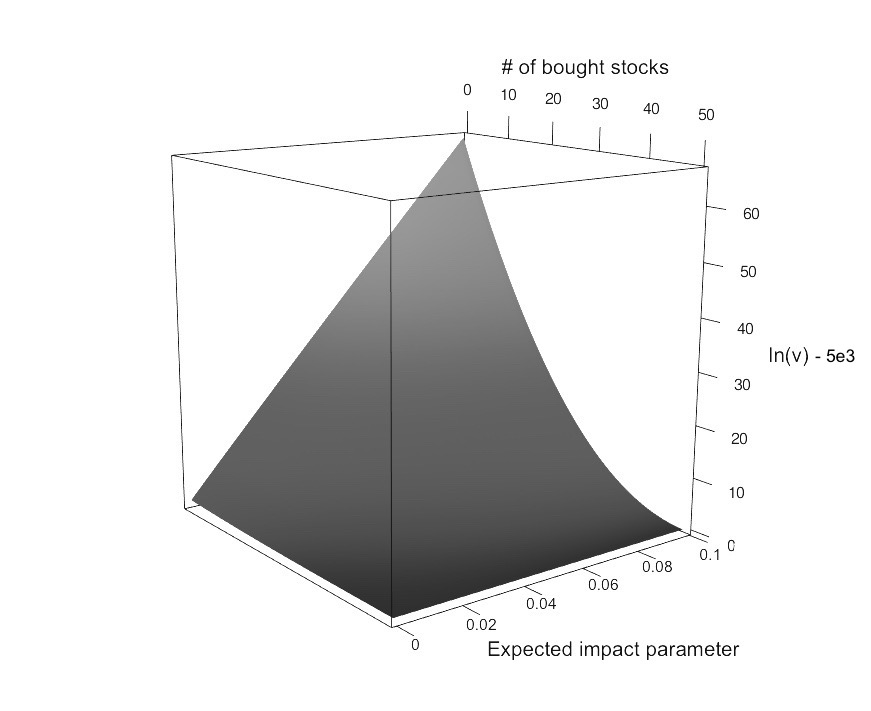}\\\includegraphics[scale=0.32]{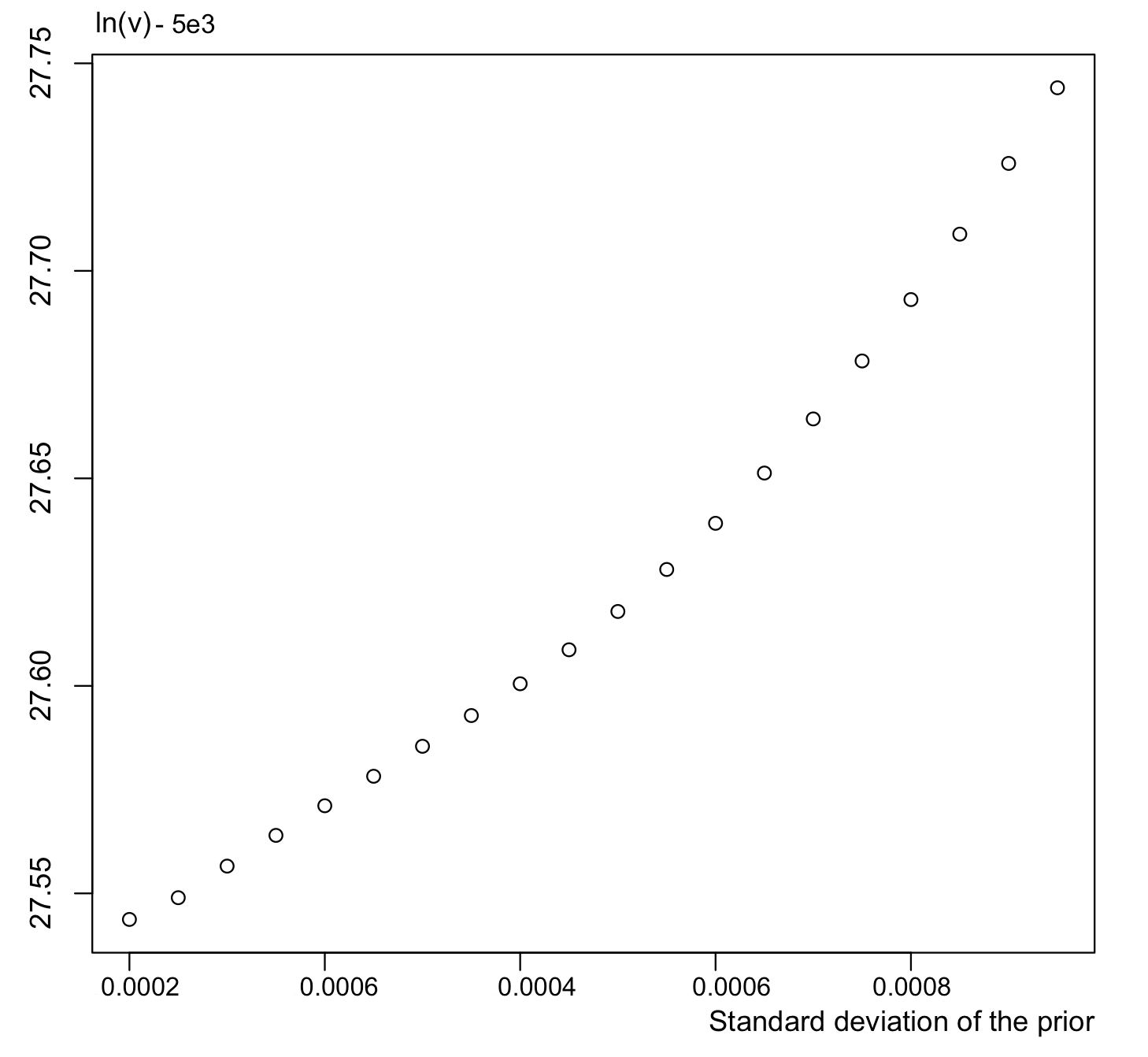}
 \captionof{figure}{Top: $\ln({\rm v})-5.10^{3}$ in terms of $(x^{3},m_{\upsilon})$ for $\sigma_{\upsilon}=5.10^{-4}$ and $t=0$. Bottom: $\ln({\rm v})-5.10^{3}$ in terms of $\sigma_{\upsilon}$ for $(x^{3},m_{\upsilon})=(0,2.10^{-2})$ at $t=0$.}
\label{fig: ln(v)}
\end{center}
\end{figure}
\newpage

    \subsection{Random execution times: application to strategies using limit-orders}

In this section, we consider a limit-order trading model. $X^{1}$ now represents a mid-price (of reference) and, between two trades, has the dynamic

\begin{equation}
dX_{t}^{1} = \sigma dW_{t}^{1}.
\end{equation}

An order is of the form $(\ell, \beta)$ in which $\ell$ is the maximal time we are ready to wait before being executed, while $\beta$ is the price at which the limit order is sent\footnote{Dark pool strategies could be considered similarly, in this case, $\beta$ would  rather describe the choice of the trading plateform}. For simplicity, each order corresponds to buying one share.

We assume that the time $\theta$ it takes to be executed follows an exponential distribution of parameter $\rho(\upsilon, X_{\tau}^{1} - \beta)$, given the information at time $\tau$. One can send a new order only after $\vartheta:=\tau + \ell \wedge \theta$. 

Hence, given a flow of orders $\phi = (\tau_{i}, \ell_{i}, \beta_{i})_{i \geq 1}$, the number $X^{3}$ of shares bought evolves according to
	\[
		\begin{aligned}
			X^3 &= X_{\vartheta_{i}}^{3} \text{ \ on } [\vartheta_{i}, \tau_{i+1}) \\
			X^{3}_{\vartheta_{i}} &= X_{\tau_{i}-}^{3} + \mathbf{1}_{\left\{\theta_{i} \leq \ell_{i}\right\}},
		\end{aligned}
	\]
in which $\vartheta_{i}:=\tau_{i} + \ell_{i} \wedge \theta_{i}$. Each $\theta_{i}$ follows an exponential distribution of parameter $\rho(\upsilon, X_{\tau_{i}}^{1} - \beta_{i})$ given $\Fc^{z,m,\phi}_{\tau_{i}-}$. As in the previous model, $X^{3}$ is restricted to $\{0,\ldots,N\}$.   The total cost $X^{2}$ of buying the shares has the dynamics
	\[
		\begin{aligned}
			X^2 &= X_{\vartheta_{i}}^{2} \text{ on } [\vartheta_{i}, \tau_{i+1}) \\
			X^{2}_{\vartheta_{i}} &= X_{\tau_{i}-}^{2} + \beta_{i}\mathbf{1}_{\left\{\theta_{i} \leq \ell_{i}\right\}}.
		\end{aligned}
	\]

We want to maximize
	\[
		-\mathbb{E}\left[e^{X^{2}_{\T[\phi]} + 1.02(N-X^{3}_{\T[\phi]}) + \frac{5.10^{2}}{2}(N-X^{3}_{\T[\phi]})^{2}}\wedge C\right],
	\]
in which $1.02$ is the best  ask (kept constant) and $5.10^{2}$ is an impact coefficient. This corresponds to the cost of liquidating instantaneously the remaining shares $(N-x^{3})^{+}$ at $T$.  
This model is a version of \cite{avellaneda2008high}, \cite{gueant2012optimal}, \cite{ho1981optimal}, see also \cite{gueant2013dealing}.
\vs2

Direct computations show that the prior process $M$   evolves according to
	\[
		\begin{aligned}
			M &= M_{\vartheta_{i}} \text{ \ on } [\vartheta_{i}, \tau_{i+1} ) \\
			M_{\vartheta_{i}} &= \mathfrak{M}_{1}(M_{\tau_{i}-}; Z_{\vartheta_{i}}, Z_{\tau_{i}-},   \alpha_{i}) \mathbf{1}_{\{\theta_{i} \leq \ell_{i}\}} 
				+ \mathfrak{M}_{2}(M_{\tau_{i}-}; Z_{\vartheta_{i}},  Z_{\tau_{i}-},  \alpha_{i}) \mathbf{1}_{\{\theta_{i} > \ell_{i}\}}
		\end{aligned}
	\]
in which
	\[
		\mathfrak{M}_{1}(m; t' ,x', t, x, l, b)[B] := \frac{\int_{B}\rho(u, x^{1}-b)e^{-\rho(u, x^{1}-b)t'}dm(u)}{\int_{\mathbb{R}^{+}}\rho(u, x^{1}-b)e^{-\rho(u, x^{1}-b)t'}dm(u)}
	\]
and
	\[
		\mathfrak{M}_{2}(m; t',x', t, x, l, b)[B] := \frac{\int_{B}e^{-\rho(u, x^{1}-b)l}dm(u)}{\int_{\mathbb{R}^{+}}e^{-\rho(u, x^{1}-b)l}dm(u)}
	\]
for all Borel set $B$.

In the case where $\Mb$ is the convex hull of a finite number of Dirac masses, then the weights associated to $M$ can be computed explicitly. 
\vs2

Here again, the map $\Psi(t,x,m)=N-x^{3}$ satisfies the conditions provided in \cite{BBD16} to ensure that Assumption \ref{ass: comp} holds.

\vs2
We now consider a numerical illustration. We take $C=10^{200}$. The time horizon is $T=15$ minutes. To simplify, we fix the reference mid-price to be $X^{1} \equiv 1$ (i.e. $\sigma=0$) and restrict to $\ell =  1$, i.e. an order is sent each minute.   We take $N=10$. One can send limit buy orders in the range $B := \{0.90, 0.92, 0.94, 0.96, 0.98\}$.

As for the intensity of the execution time, we use an exponential form as in \cite{gueant2012optimal}: $\rho(u, x^{1} - b) = \lambda(u)e^{-20(0.98-b)}$ in which $\lambda(u) = -\ln(1-u)$. This means that the probability to be executed at the price 0.98 within one minute is $u$. Orders are sent each minute, but we use a finner time grid in order to take into account that it can be executed before this maximal time-length. The original prior is supported by two Dirac masses at $u=0.3$ and $u=0.8$. The corresponding probabilities of being executed within one minute are plotted in Figure \ref{prob_exec}.

  \begin{figure}
\begin{center}
\includegraphics[scale=0.4]{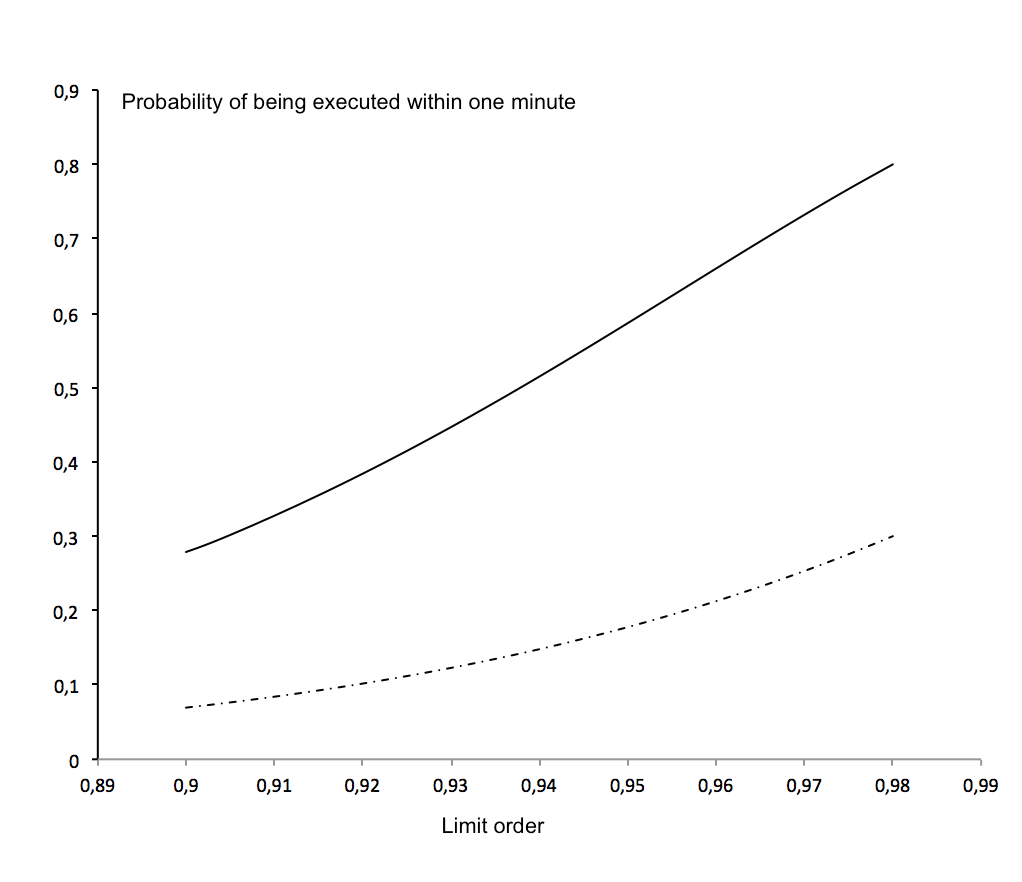}
 \captionof{figure}{Solid: $u=0.8$. Dashed: $u=0.3$}
\label{prob_exec}
\end{center}
\end{figure}

Our time step corresponds to 15 seconds, so that every 15 seconds the controller can launch a new order if the previous one has been executed before the maximal 1 minute time-length. In Figure \ref{log_diff}, we plot the difference, in logarithms, between the value functions obtained in the latter case and for a time step of 1 minute (in which case a new order cannot be launched before one minute). Clearly, the possibility of launching new orders in advance is an advantage.

   \begin{figure}[!h]
\begin{center}
\includegraphics[scale=0.31]{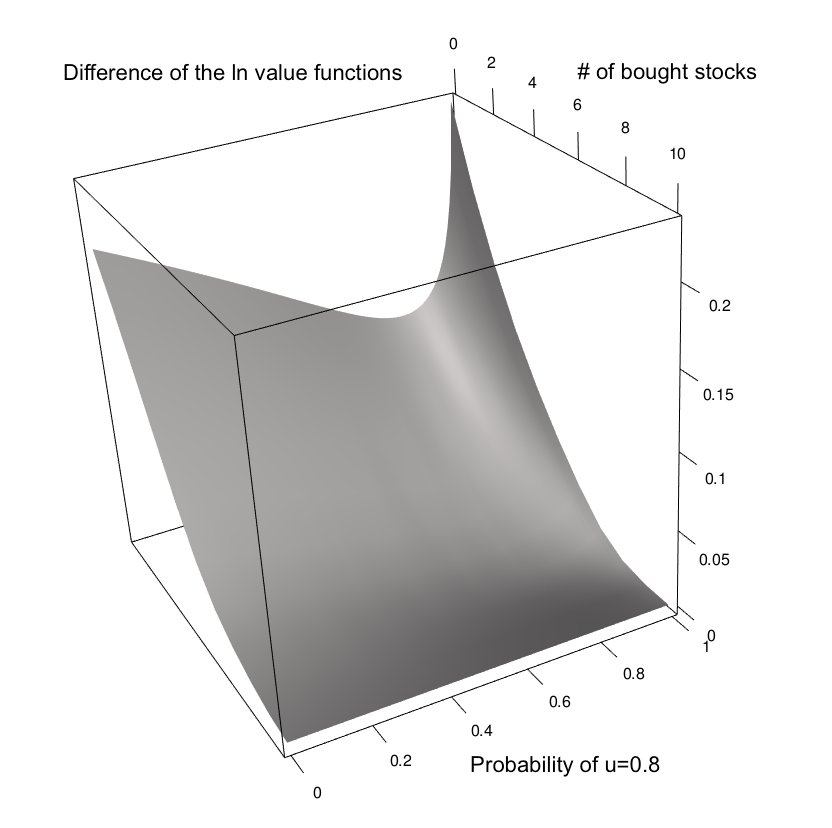}
 \captionof{figure}{}
\label{log_diff}
\end{center}
\end{figure}

In Figure \ref{fig:A dark}, we plot the optimal policy at time $t=0$ and $t=7.5$ minutes. As expected,  the algorithm is more aggressive when the probability of having $\upsilon=0.8$ is higher.

In Figure \ref{fig:simul dark 1}, we plot a simulated path. The red and black lines and points correspond to the same realization of the random variables at hand, but for different values of the real value of $\upsilon$. Black corresponds to the most favorable case $\upsilon=0.8$, while red corresponds to $\upsilon=0.8$ for the first $7.5$ minutes and $\upsilon=0.3$ for the remaining time. The initial prior is $\P[\upsilon=0.8]=9\%$. Again, the algorithm adapts pretty well to this shock on  the true parameter.  We also see that it is more aggressive when the prior probability of being in the favorable case is high.  \newpage

%
  \begin{figure}
\begin{center}
\includegraphics[scale=0.4]{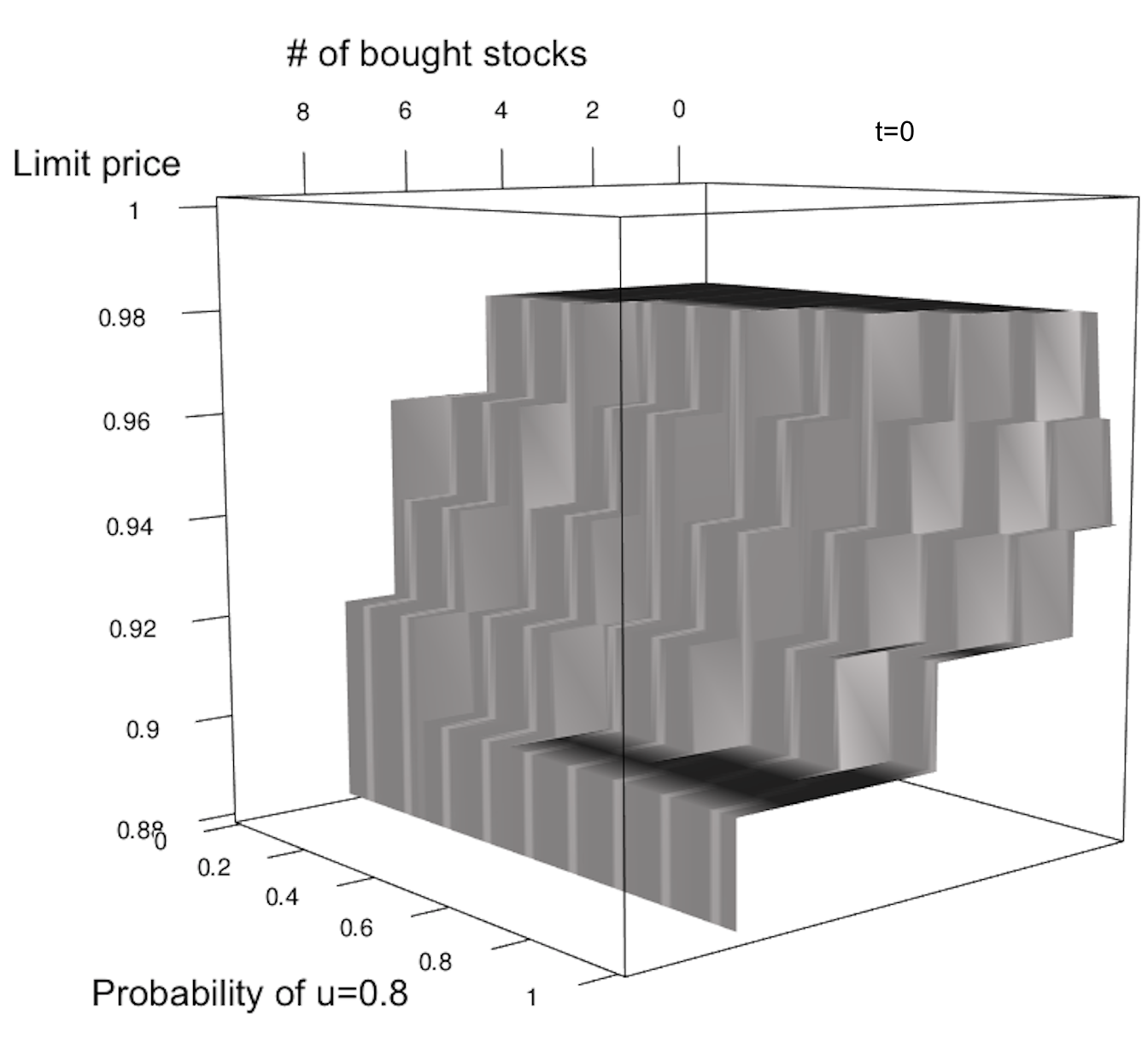}\\
~~~\\

\includegraphics[scale=0.4]{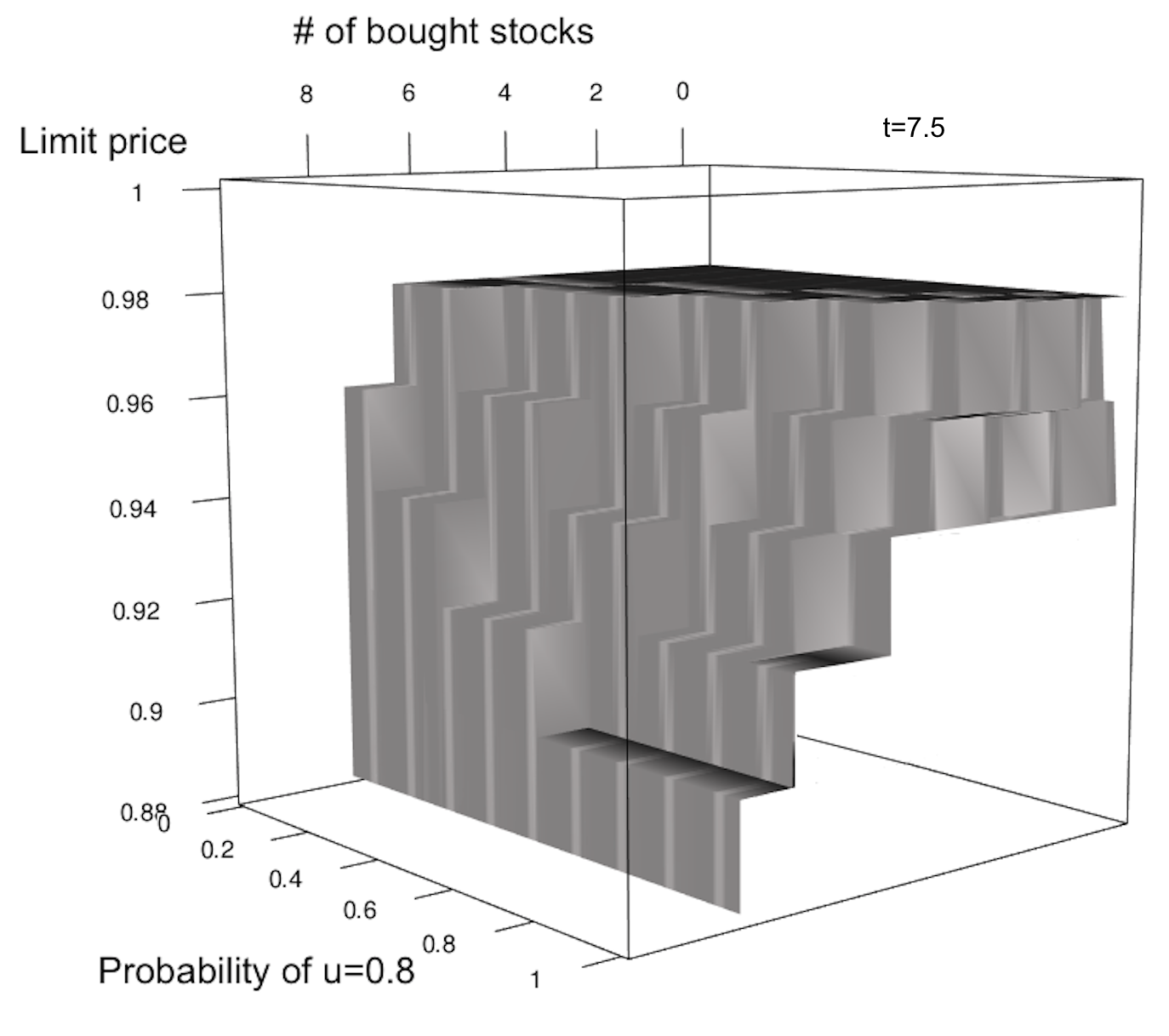}
 \captionof{figure}{Top: $t=0$. Bottom: $t=7.5$ minutes}
\label{fig:A dark}
\end{center}
\end{figure}

  \begin{figure}
\includegraphics[scale=0.29]{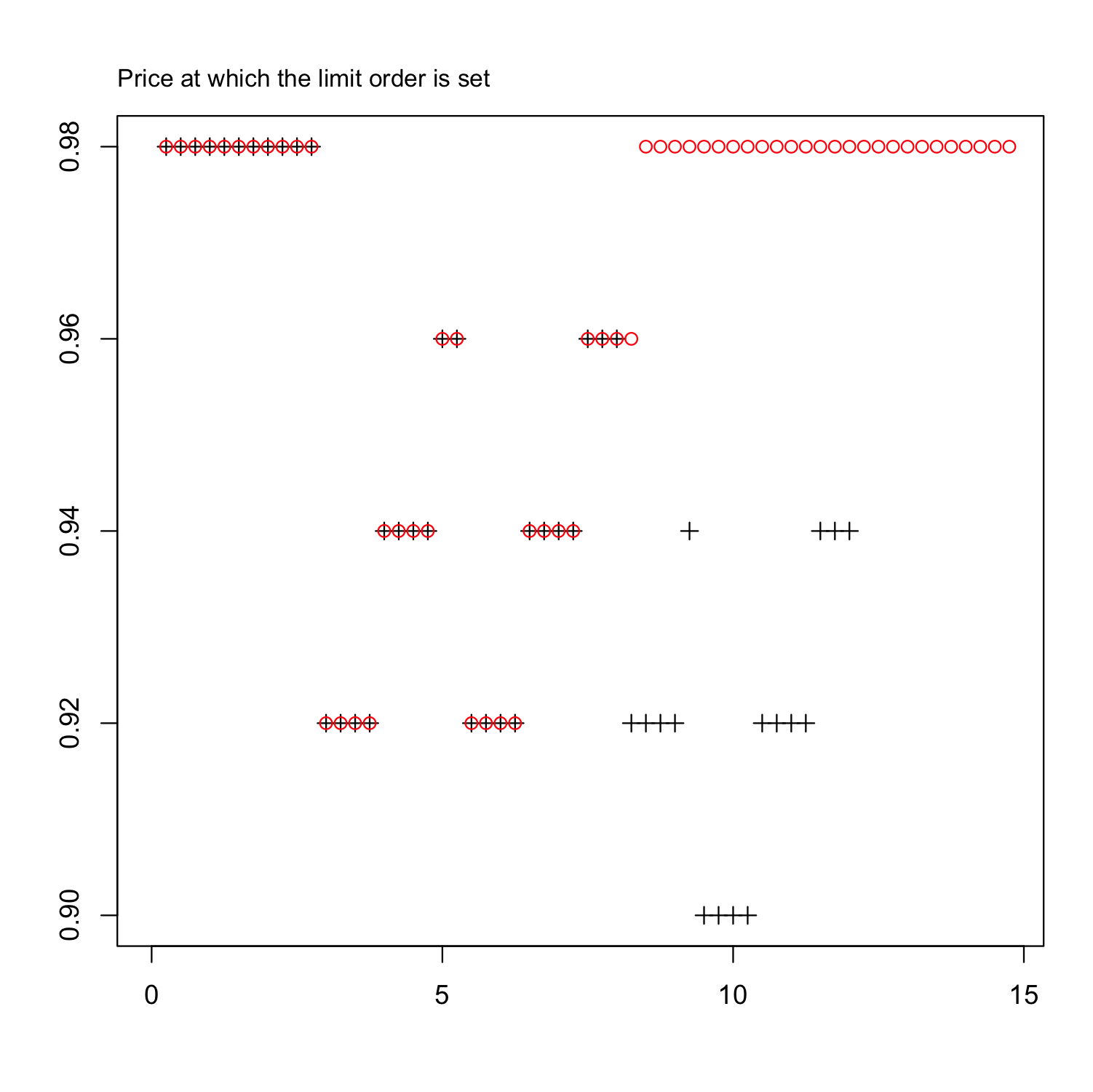}\includegraphics[scale=0.29]{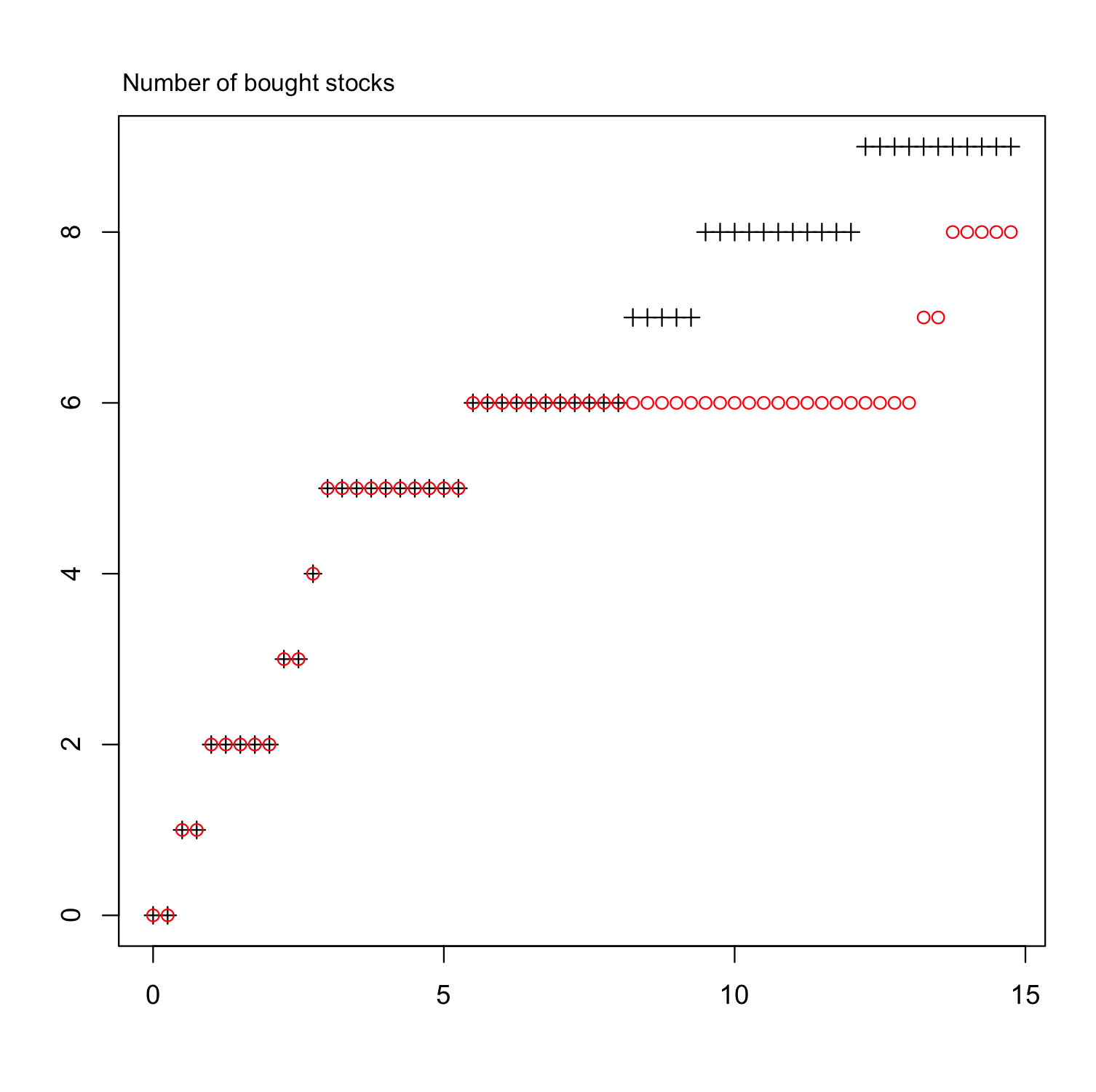}\\

\includegraphics[scale=0.29]{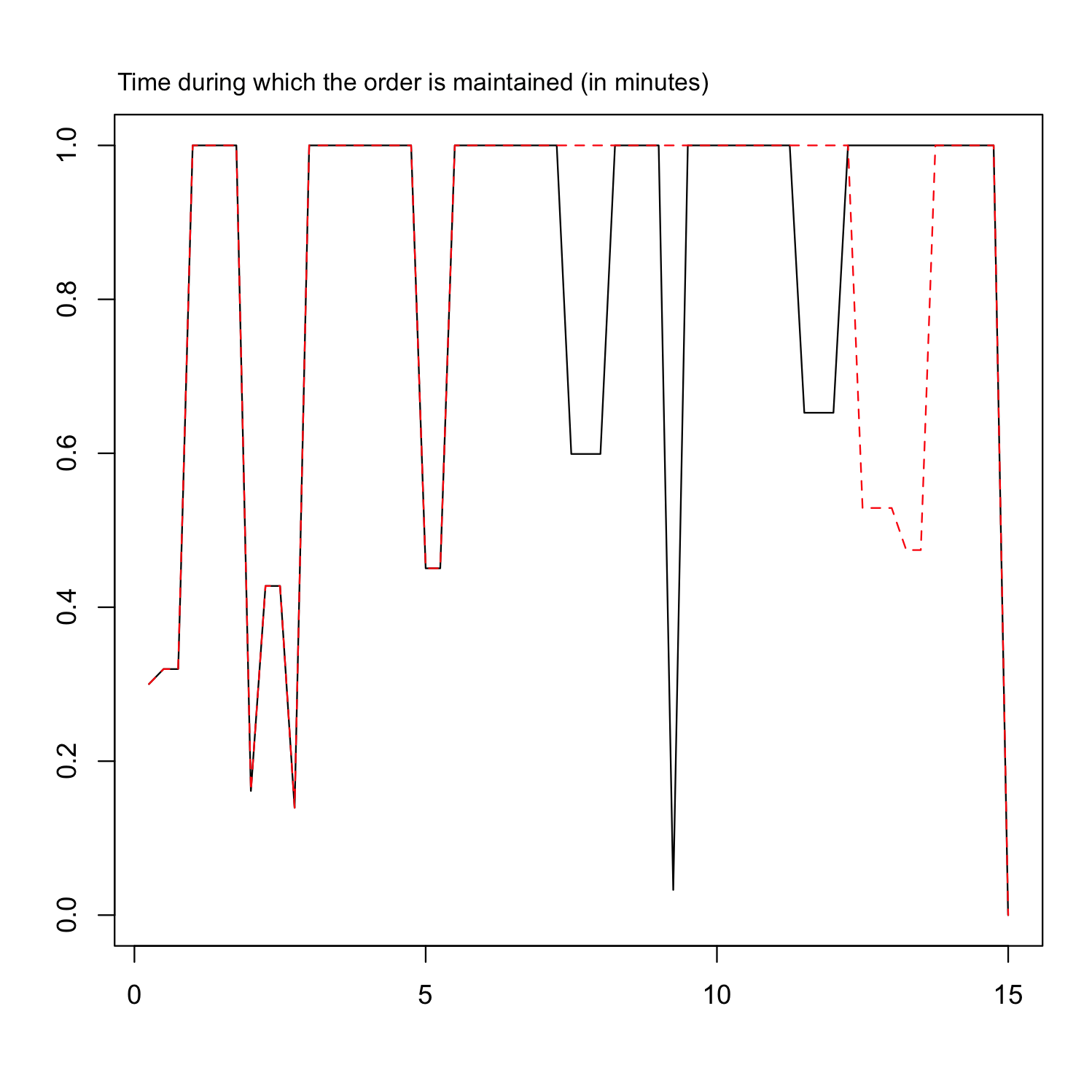}\includegraphics[scale=0.29]{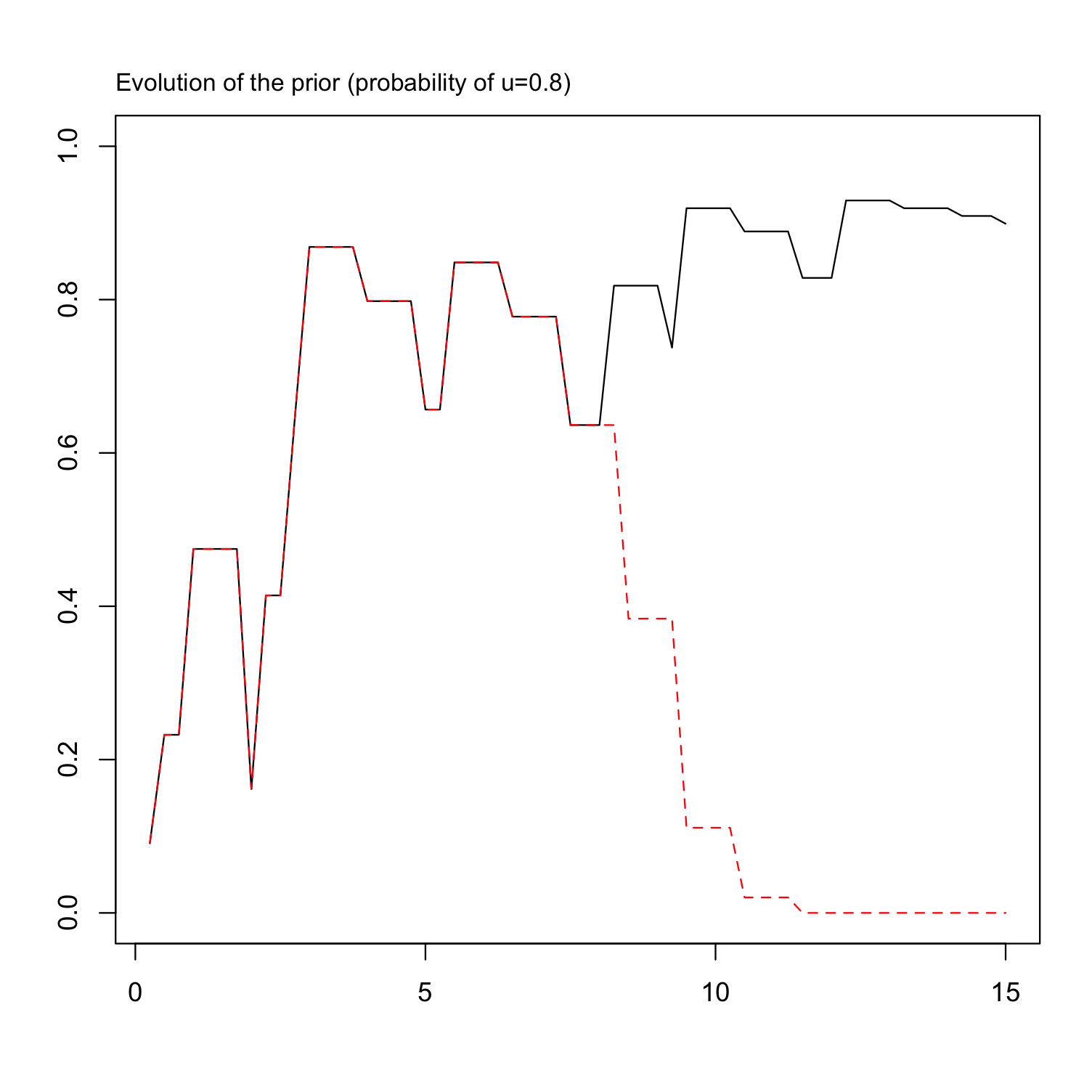}
 \captionof{figure}{Black crosses and solid lines: $\upsilon=0.8$. Red circles and dashed lines: $\upsilon=0.8$ before $t=7.5$ minutes and $\upsilon=0.3$ after. $x$-axis$=$ time in minutes.}
\label{fig:simul dark 1}.
\end{figure}

 \clearpage
 
 \bibliographystyle{plain}

\begin{thebibliography}{}

\end{thebibliography}


\begin{thebibliography}{10}

\bibitem{almgren2006bayesian}
R.~Almgren and J.~Lorenz.
\newblock Bayesian adaptive trading with a daily cycle.
\newblock {\em The Journal of Trading}, 1(4):38--46, 2006.

\bibitem{avellaneda2008high}
M.~Avellaneda and S.~Stoikov.
\newblock High-frequency trading in a limit order book.
\newblock {\em Quantitative Finance}, 8(3):217--224, 2008.

\bibitem{bacry2014hawkes}
E.~Bacry and J.-F. Muzy.
\newblock Hawkes model for price and trades high-frequency dynamics.
\newblock {\em Quantitative Finance}, 14(7):1147--1166, 2014.

\bibitem{BBD16}
N.~Baradel, B.~Bouchard, and N.-M. Dang.
\newblock Optimal control under uncertainty and bayesian parameters
  adjustments.
\newblock {\em Hal preprint}, 2016.

\bibitem{barles1994solutions}
G.~Barles.
\newblock An introduction to the theory of viscosity solutions for first-order
  hamilton--jacobi equations and applications.
\newblock In {\em Hamilton-Jacobi equations: approximations, numerical analysis
  and applications}, pages 49--109. Springer, 2013.

\bibitem{BaSo91}
G.~Barles and P.E. Souganidis.
\newblock Convergence of approximation schemes for fully nonlinear second order
  equations.
\newblock In {\em Decision and Control, 1990., Proceedings of the 29th IEEE
  Conference on}. IEEE, 1990.

\bibitem{bensoussan2004stochastic}
A.~Bensoussan.
\newblock {\em Stochastic control of partially observable systems}.
\newblock Cambridge University Press, 2004.

\bibitem{boucharddanglehalle}
B.~Bouchard, M.~Dang, and C.-A. Lehalle.
\newblock Optimal control of trading algorithms: A general impulse control
  approach.
\newblock {\em SIAM Journal on financial mathematics}, 4:404--438, 2011.

\bibitem{bouchaud2010price}
J.-P. Bouchaud.
\newblock Price impact.
\newblock {\em Encyclopedia of quantitative finance}, 2010.

\bibitem{brokmann2014slow}
X.~Brokmann, E.~S\'eri\'e, J.~Kockelkoren, and J.-P. Bouchaud.
\newblock Slow decay of impact in equity markets.
\newblock {\em arXiv preprint arXiv:1407.3390}, 2014.

\bibitem{camilli1995approximation}
F.~Camilli and M.~Falcone.
\newblock An approximation scheme for the optimal control of diffusion
  processes.
\newblock {\em Mod{\'e}lisation math{\'e}matique et analyse num{\'e}rique},
  29(1):97--122, 1995.

\bibitem{camilli2009finite}
F.~Camilli and E.~R. Jakobsen.
\newblock A finite element like scheme for integro-partial differential
  hamilton-jacobi-bellman equations.
\newblock {\em SIAM Journal on Numerical Analysis}, 47(4):2407--2431, 2009.

\bibitem{delattre2013estimating}
S.~Delattre, C.~Y. Robert, and M.~Rosenbaum.
\newblock Estimating the efficient price from the order flow: a brownian cox
  process approach.
\newblock {\em Stochastic Processes and their Applications}, 123(7):2603--2619,
  2013.

\bibitem{duncan2002adaptive}
T.~E. Duncan and B.~Pasik-Duncan.
\newblock Adaptive control of continuous time stochastic systems.
\newblock {\em International Journal of Adaptive Control and Signal
  Processing}, 16(5):327--340, 2002.

\bibitem{easley}
D.~Easley and N.~M. Kiefer.
\newblock Controlling a stochastic process with unknown parameters.
\newblock {\em Econometrica: Journal of the Econometric Society}, pages
  1045--1064, 1988.

\bibitem{gueant2012optimal}
O.~Gu{\'e}ant, C.-A. Lehalle, and J.~Fernandez-Tapia.
\newblock Optimal portfolio liquidation with limit orders.
\newblock {\em SIAM Journal on Financial Mathematics}, 3(1):740--764, 2012.

\bibitem{gueant2013dealing}
O.~Gu{\'e}ant, C.-A. Lehalle, and J.~Fernandez-Tapia.
\newblock Dealing with the inventory risk: a solution to the market making
  problem.
\newblock {\em Mathematics and financial economics}, 7(4):477--507, 2013.

\bibitem{ho1981optimal}
T.~Ho and H.~R. Stoll.
\newblock Optimal dealer pricing under transactions and return uncertainty.
\newblock {\em Journal of Financial economics}, 9(1):47--73, 1981.

\bibitem{laruelle2011optimal}
S.~Laruelle, C.-A. Lehalle, and G.~Pag\`es.
\newblock Optimal split of orders across liquidity pools: a stochastic
  algorithm approach.
\newblock {\em SIAM Journal on Financial Mathematics}, 2(1):1042--1076, 2011.

\bibitem{laruelle2013optimal}
S.~Laruelle, C.-A. Lehalle, and G.~Pag\`es.
\newblock Optimal posting price of limit orders: learning by trading.
\newblock {\em Mathematics and Financial Economics}, 7(3):359--403, 2013.

\bibitem{lehallebook}
C.-A. Lehalle and S.~Laruelle.
\newblock {\em Market microstructure in practice}.
\newblock World Scientific Publishing Co. Pte. Ltd., 2013.

\bibitem{rosenbaum2007etude}
M.~Rosenbaum.
\newblock {\em {\'E}tude de quelques probl{\`e}mes d'estimation statistique en
  finance}.
\newblock PhD thesis, ENSAE ParisTech; Universit{\'e} Paris-Est, 2007.

\end{thebibliography}

  \end{document}